\newtheorem{theo}{Theorem}
\newtheorem{defi}{Definition}
\newtheorem{lemm}{Lemma}
\newtheorem{prop}{Proposition}
\newcommand{\tr}{\mathrm{tr}}
\newcommand{\stsp}{\mathcal{S}(\mathcal{H})}
\begin{document}

\preprint{}

\title[]{
Minimal sufficient positive-operator valued measure on a separable Hilbert space
}% Force line breaks with \\
%\thanks{Footnote to title of article.}

\author{Yui Kuramochi}
 \affiliation{Department of Nuclear Engineering, Kyoto University, 6158540 Kyoto, Japan}%Lines break automatically or can be forced with \\
 \email{kuramochi.yui.22c@st.kyoto-u.ac.jp}
 
%\author{Masahito Ueda}%
%\affiliation{ 
%Department of Physics, University of Tokyo, 7-3-1 Hongo, Bunkyo-ku, Tokyo 113-0033, Japan%
%%\\This line break forced with \textbackslash\textbackslash
%}%

%\author{C. Author}
% \homepage{http://www.Second.institution.edu/~Charlie.Author.}
%\affiliation{%
%Second institution and/or address%\\This line break forced% with \\
%}%

\date{\today}% It is always \today, today,
             %  but any date may be explicitly specified

\begin{abstract}
We introduce a concept of a minimal sufficient positive-operator valued measure (POVM),
which is the least redundant POVM among the POVMs that have the equivalent information
about the measured quantum system.
Assuming the system Hilbert space to be separable,
we show that 
for a given POVM
a sufficient statistic called a Lehmann-Scheff\'{e}-Bahadur statistic
induces a minimal sufficient POVM.
We also show that 
%when the outcome space of a POVM is
%a standard Borel space,
every POVM has an equivalent minimal sufficient POVM and that
such a minimal sufficient POVM is unique up to 
relabeling neglecting null sets.
We apply these results to discrete POVMs and information conservation conditions proposed by the author.
\end{abstract}

\pacs{03.65.Ta, 03.67.-a, 02.30.Cj}% PACS, the Physics and Astronomy
                             % Classification Scheme.
\keywords{quantum measurement, minimal sufficient positive-operator valued measure,
fuzzy equivalence relation}%Use showkeys class option if keyword
                              %display desired
\maketitle

%\begin{quotation}
%The ``lead paragraph'' is encapsulated with the \LaTeX\ 
%\verb+quotation+ environment and is formatted as a single paragraph before the first section heading. 
%(The \verb+quotation+ environment reverts to its usual meaning after the first sectioning command.) 
%Note that numbered references are allowed in the lead paragraph.
%%
%The lead paragraph will only be found in an article being prepared for the journal \textit{Chaos}.
%\end{quotation}

%%%%%%%%%%%%%%%%%%%%%%Introduction %%%%%%%%%%%%%%%%%%%%%%%%%%%%%%%
\section{Introduction}
\label{sec:intro}
A measurement outcome statistic of a general quantum measurement process 
is described by 
a positive-operator valued measure~\cite{davieslewisBF01647093,davies1976quantum}
(POVM).
Such a description of quantum measurements
enables us to formulate  both projective and non-projective measurements.
However, for the price of such a generality,
some POVMs contain redundant information
irrelevant to the system
and, due to such redundancies,
for a given POVM 
there exist infinitely many POVMs that 
bring us the equivalent information about the system.

To clarify this point,
let us consider the following example of a pair of discrete POVMs
$A$ and $B$:
\begin{gather*}
	A = \{ A_0 , A_1 \} ,
	\\
	O \leq A_0 \leq I ,
	\,
	A_1 = I - A_0 ,
	\\
	B =
	\{
	B_{00} , B_{01} , B_{10} , B_{11}
	\} ,
	\\
	B_{i0} = \lambda A_i ,
	\,
	B_{i1} = (1 -\lambda ) A_{i}.
\end{gather*}
Here 
$\lambda \in (0,1)$ and
$A_i$ and $B_{ij}$ are bounded operators on a Hilbert space
$\mathcal{H}$
corresponding to the measured quantum system
and $O$ and $I$ are zero and identity operators,
respectively.
The measurement corresponding to the POVM $B$ can
be realized,
for example,
as follows:
perform the measurement $A$, whose measurement outcome is $i \in \{ 0,1 \}$,
and generate a binary random variable $j$, which gives 
$0$ with a probability $\lambda$
and $1$ with a probability $1 -\lambda ,$
and the measurement outcome of $B$ is given by a pair
$(i,j).$
Apparently $B = \{ B_{ij} \}$ contains redundant information,
which is in this case the classical random variable $j$,
and $A$ and $B$ give the equivalent information about the system.

Then it is natural to ask whether we can reduce such redundancies
for a given POVM
and how far such reductions proceed.
In this paper, 
to formulate and answer this question,
we introduce a concept of a minimal sufficient POVM
which corresponds to the least redundant POVM
among the POVMs that give the same information about the system.
The main finding of this paper
(Theorem~\ref{theo:exunique})
is that
for any POVM on a separable Hilbert space
%with a standard Borel outcome space
there exists a minimal sufficient POVM
that has the equivalent information about the system
and that such a minimal sufficient POVM is unique up to 
almost isomorphism,
which is the relabeling neglecting null sets.

The concept of the minimal sufficient POVM
has two origins:
a minimal sufficient statistics~\cite{lehmann1950,bahadur1954}
in mathematical statistics
and a fuzzy equivalence 
relation~\cite{martens1990,Dorofeev1997349,Heinonen200577,Jencova2008}
known in quantum measurement theory.
Two POVMs are fuzzy equivalent if one of them can be  realized by
a classical post-processing of the other.
The concept of the fuzzy equivalence relation 
is used in the definition of the minimal sufficient POVM.
The minimal sufficiency condition for the POVM can be regarded as a
generalization of the minimal sufficient statistics
in the sense that
we consider a more general class of post-processing
which includes taking statistics.

This paper is organized as follows.
In Sec.~\ref{sec:preliminaries},
we show some preliminary results on mathematical statistics and quantum measurement theory
which will be used in Sec.~\ref{sec:minimal}.
In Sec.~\ref{sec:minimal},
we define
two definitions of the minimal sufficient POVM 
corresponding to two kinds of fuzzy equivalence relations
and introduce 
a sufficient statistic called a Lehmann-Scheff\'{e}-Bahadur 
(LSB) statistic
for a given POVM.
It is shown in
Theorem~\ref{theo:minimal_exists} that the
LSB statistic is a minimal sufficient statistic
for the given POVM
and a POVM induced by the LSB statistic is a minimal sufficient POVM.
After introducing a concept of almost isomorphism between POVMs,
we show 
in Theorem~\ref{theo:exunique}
that a POVM %with a standard Borel outcome space
has an equivalent minimal sufficient POVM
and that such a minimal sufficient POVM is unique up to almost isomorphism.
In Sec~\ref{sec:discrete},
we consider
%apply
%the main results of the previous section
%to 
discrete POVMs
and prove that 
for a given discrete POVM
there is a unique equivalent minimal sufficient 
POVM that is discrete and has no zero elements.
In Sec.~\ref{sec:cons},
we apply the main results 
%on the minimal sufficient POVM 
to information conservation conditions
proposed by the author~\cite{PhysRevA.91.032110,:/content/aip/journal/jmp/56/9/10.1063/1.4931625}.

%%%%%%%%%%%%%%%%%%preliminary section%%%%%%%%%%%%%%%%%%%%%
\section{Preliminaries}
\label{sec:preliminaries}
In this section, we introduce some preliminary concepts and results 
on mathematical statistics and quantum measurement theory.

%%%%%%%%%%%%%%%%%%%%%%%%subsection statistics%%%%%%%%%%%%%%%%%%%%
\subsection{Sufficient and minimal sufficient statistics}
Let $(\Omega , \mathscr{B})$ a measurable space.
A family of probability measures $\mathcal{P}$ with the outcome space $(\Omega , \mathscr{B})$
is called a statistical model on $(\Omega , \mathscr{B})$.

Let $\mathcal{P}$ be a statistical model on $(\Omega , \mathscr{B})$
and let $\lambda$ be a probability measure on $(\Omega , \mathscr{B})$.
$\mathcal{P}$ is said to be dominated by $\lambda$,
denoted by $\mathcal{P} \ll \lambda$,
if every element $P \in \mathcal{P}$ is absolutely continuous with respect to $\lambda .$
A statistical model $\mathcal{P}$ is said to be dominated if there exists a probability measure that dominates $\mathcal{P} .$
A dominated statistical model $\mathcal{P}$ has a countable subset $\{ P_i \}_{i \geq 1} \subset \mathcal{P}$
such that
$\lambda := \sum_{i \geq 1} c_i P_i$ dominates $\mathcal{P}$
if $c_i >0 $ and $\sum_{i\geq 1} c_i =1$ hold~\cite{halmos1949application}.
Such $\lambda$ is called a pivotal measure for $\mathcal{P}.$

Let $\mathcal{P} = \{ P_\theta \}_{\theta \in \Theta}$ be a statistical model
on an outcome space $(\Omega , \mathscr{B})$
and let $(\Omega_T , \mathscr{B}_T)$ be a measurable space.
A $\mathscr{B}/\mathscr{B}_T$-measurable map $T \colon \Omega \to \Omega_T$ 
is called a \textit{statistic}.
The set of $\mathscr{B}/ \mathscr{B}_T$-measurable maps (statistics) is denoted by
$\mathbb{M}( (\Omega , \mathscr{B}) \to (\Omega_T , \mathscr{B}_T) )$.
$T$ is said to be \textit{sufficient} with respect to the statistical model $\mathcal{P}$
if for every $E \in \mathscr{B}$, there exists a $\mathscr{B}_T$-measurable function
$P(E|\cdot ) \colon \Omega_T \to [0,\infty ) $ such that
%\[
%	P_\theta (E \cap T^{-1}(F) )
%	=
%	\int_{F}
%	P(E| t)
%	P_\theta^T (dt)
%	\quad
%	\forall F \in \mathscr{B}_T,
%	\forall \theta \in \Theta ,
%\]
%where $P_\theta^T (F) := P_\theta (T^{-1} (F)) .$
$P_\theta (E|t) = P(E|t)$ $P^T_\theta $-a.e. 
for each $P_\theta \in \mathcal{P}$,
where $P_\theta (E|t)$ is the conditional probability of $P_\theta$ 
for given $T=t$
and $P_\theta^T (\cdot) := P_\theta (T^{-1} (\cdot)) .$

%%%%%%%%%%%%%definition of f-divergence
Let $P$ and $Q$ be probability measures with an outcome space $(\Omega, \mathscr{B})$.
%and let $Q = Q_\mathrm{ac} + Q_\mathrm{s}$ be a Lebesgue decomposition of $Q$ with respect to $P$,
%where $Q_\mathrm{ac} \ll P$ and $Q_\mathrm{s}$ and $P$ are singular.
Suppose $f \colon  [ 0 , \infty ) \to \mathbb{R}$ be a strictly convex function
such that $f(1) = 0.$
Taking a $\sigma$-finite measure $\mu$ dominating $P$ and $Q$,
we write 
Radon-Nikod\'ym derivatives as 
$p (x): = d P / d \mu (x)$ and $q (x) := d Q /d \mu  (x) . $
An $f$-divergence~\cite{Csiszar67,1705001} between $P$ and $Q$ is defined by
\[
	D_f (P , Q) 
	:=
	\int_\Omega
	f
	\left(  
	\frac{  p (x) }{q (x) } %(x)
	\right)
	q(x)
	d \mu (x)
	,
\]
where the integral on the RHS is independent of the choice of $\mu$
and the following conventions are adopted:
\begin{gather*}
	f^\ast (0)
	:=
	\lim_{t \to \infty} \frac{ f(t) }{t} ,
	\\
	 0 \cdot 
	 f 
	 \left(    
	 \frac{ p  }{ 0  }
	 \right)
	 = 
	 p f^\ast (0) ,
	 \quad
	 0 \cdot f^\ast (0) = 0.
\end{gather*}
For later use, we fix an $f$ such that $D_f (P , Q) < \infty$ 
for each $P$ and $Q .$
An example of such an $f$ is given by $f(t) = (\sqrt{t} -1)^2,$ and 
the corresponding $f$-divergence is the Hellinger distance
\[
	H (P,Q)
	=
	%\frac{1}{2}
	\int_\Omega
	\left(
%	\sqrt{ \frac{ dP }{ d\mu } }
%	-
%	\sqrt{ \frac{ dQ }{ d\mu } }
	\sqrt{ p(x) }  - \sqrt{ q(x) }
	\right)^2
	d\mu (x)
	\leq 2 .
\]
%where $\mu$ is a $\sigma$-finite measure that dominates $P$ and $Q$,
%e.g. $\mu = P + Q .$

The following theorem gives necessary and sufficient conditions for the sufficiency of a statistic.

%%%%%%%%%%%%%%%%%%%%%Theorem: sufficient statistics%%%%%%%%%%%%%%%%%%%%%%%%%%%
\begin{theo}%[Halmos and Savage~\cite{halmos1949application}, Bahadur~\cite{bahadur1954}]
\label{theo:sufstat}
Let $\mathcal{P} = \{ P_\theta \}_{\theta \in \Theta}$ be a statistical model
on a measurable space $(\Omega , \mathscr{B})$
dominated by a pivotal probability measure $\lambda$,
and let $T \colon \Omega \to \Omega_T$
be a $\mathscr{B} / \mathscr{B}_T$-measurable statistic
and $P^T_\theta (\cdot) := P_\theta (T^{-1} (\cdot)).$
%Define $\lambda := \sum_{i \geq 1} c_i P_{\theta_i}$,
%such that
%$c_i >0$,
%$\sum_{i \geq 1 }c_i =1$
%and 
%$\lambda$ dominates $\mathcal{P}$.  
Then the following conditions are equivalent:
\begin{enumerate}[(i)]
\item
$T$ is sufficient;
\item
for every $P_{\theta} \in \mathcal{P}$,
there exists a $\mathscr{B}_T$-measurable real valued function $g_\theta( \cdot)$
such that
\begin{align*}
	\frac{ dP_\theta }{ d \lambda  } 
	(x)
	=
	g_\theta (T(x))
	\quad
	\lambda \text{-}\mathrm{a.e.;}
\end{align*}
%where $d\mu /d\nu$ denotes the Radon-Nikod\'ym derivative of a ($\sigma$-)finite measure $\mu$
%with respect to a ($\sigma$-)finite measure $\nu$;
\item
for every $P_{\theta} \in \mathcal{P}$,
\begin{align*}
	\frac{ dP_\theta }{ d \lambda  } (x)
	=
	\frac{  d P_\theta^T }{ d \lambda^T }
	(T(x)) 
	\quad
	\lambda\text{-}\mathrm{a.e.},
\end{align*}
where $\lambda^T(\cdot) := \lambda(T^{-1} (\cdot )) ;$
\item
$
D_f ( P_{\theta_1} ,  P_{\theta_2}  ) 
=
D_f ( P_{\theta_1}^T ,  P_{\theta_2}^T  ) 
$
$(\forall P_{\theta_1},  \forall P_{\theta_2} \in \mathcal{P}).$
\end{enumerate}
\end{theo}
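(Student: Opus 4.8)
The plan is to establish the equivalences through the cycle (i)$\Rightarrow$(iii)$\Rightarrow$(ii)$\Rightarrow$(i) together with (iii)$\Leftrightarrow$(iv). Throughout I would write $p_\theta := dP_\theta/d\lambda$, let $\mathscr{T} := T^{-1}(\mathscr{B}_T)$ denote the sub-$\sigma$-algebra of $\mathscr{B}$ generated by $T$, and exploit that, because $\lambda$ is pivotal, it can be written as $\lambda = \sum_{i\geq 1} c_i P_i$ with $P_i \in \mathcal{P}$, $c_i>0$, $\sum_i c_i = 1$, so that $\sum_i c_i p_i = 1$ $\lambda$-a.e. The computational backbone is the identity $E_\lambda[p_\theta \mid \mathscr{T}](x) = \frac{dP_\theta^T}{d\lambda^T}(T(x))$ $\lambda$-a.e., which follows by testing against the generating sets $T^{-1}(A)$, $A\in\mathscr{B}_T$, and which shows that the candidate factor $g_\theta$ appearing in (ii)/(iii) is precisely the conditional expectation of $p_\theta$ given $\mathscr{T}$. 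Consequently (ii) and (iii) are both equivalent to the single statement that $p_\theta$ is ($\lambda$-a.e. equal to) a $\mathscr{T}$-measurable function.

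For (i)$\Rightarrow$(iii) I would observe that the common conditional probability $P(E\mid\cdot)$ also serves as a conditional probability for $\lambda$: averaging $P_i(E\cap T^{-1}(A)) = \int_A P(E\mid t)\,dP_i^T(t)$ against $c_i$ gives $\lambda(E\cap T^{-1}(A)) = \int_A P(E\mid t)\,d\lambda^T(t)$. Writing $P_\theta(E) = \int P(E\mid t)\,dP_\theta^T(t)$ and $\int_E g_\theta(T(x))\,d\lambda(x) = \int g_\theta(t)\,\lambda(E\mid t)\,d\lambda^T(t)$ with $g_\theta = dP_\theta^T/d\lambda^T$, then substituting $\lambda(E\mid t)=P(E\mid t)$, the two right-hand sides agree for every $E\in\mathscr{B}$, yielding $p_\theta = g_\theta\circ T$ $\lambda$-a.e. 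The implication (iii)$\Rightarrow$(ii) is immediate, and for (ii)$\Rightarrow$(i) I would verify that $\lambda(E\mid\cdot)$ is a $\theta$-independent version of $P_\theta(E\mid\cdot)$: the factorization $p_\theta = g_\theta\circ T$ gives $dP_\theta^T/d\lambda^T = g_\theta$, so $\int_A \lambda(E\mid t)\,dP_\theta^T(t) = P_\theta(E\cap T^{-1}(A))$ for all $A$, which is sufficiency.

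For (iii)$\Rightarrow$(iv) I would use $\lambda$ as the common dominating measure, so that $p_{\theta_1}=g_{\theta_1}\circ T$ and $p_{\theta_2}=g_{\theta_2}\circ T$, making the integrand $f(p_{\theta_1}/p_{\theta_2})p_{\theta_2}$ equal to $(f(g_{\theta_1}/g_{\theta_2})g_{\theta_2})\circ T$, which is $\mathscr{T}$-measurable; pushing the integral forward along $T$ converts $D_f(P_{\theta_1},P_{\theta_2})$ into $\int f(g_{\theta_1}/g_{\theta_2})g_{\theta_2}\,d\lambda^T = D_f(P_{\theta_1}^T,P_{\theta_2}^T)$. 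The only point needing care is that the conventions for $f^\ast(0)$ on the zero set of $g_{\theta_2}$ are respected by the pushforward, which holds because the same pointwise expression is integrated before and after.

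The main obstacle is (iv)$\Rightarrow$(iii). I would phrase it through the jointly convex, positively homogeneous perspective function $F(a,b):=bf(a/b)$, with $F(a,0)=af^\ast(0)$ and $F(0,0)=0$, for which $D_f(P_{\theta_1},P_{\theta_2})=\int F(p_{\theta_1},p_{\theta_2})\,d\lambda$ and, writing $\bar p_\theta := E_\lambda[p_\theta\mid\mathscr{T}] = g_\theta\circ T$, also $D_f(P_{\theta_1}^T,P_{\theta_2}^T)=\int F(\bar p_{\theta_1},\bar p_{\theta_2})\,d\lambda$. Conditional Jensen gives $E_\lambda[F(p_{\theta_1},p_{\theta_2})\mid\mathscr{T}] \geq F(\bar p_{\theta_1},\bar p_{\theta_2})$ $\lambda$-a.e. (the data-processing inequality), and since $D_f<\infty$ the hypothesis (iv) forces equality $\lambda$-a.e. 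Because $f$ is strictly convex, $F$ is strictly convex transverse to rays through the origin, so equality in conditional Jensen forces the conditional law of $(p_{\theta_1},p_{\theta_2})$ given $\mathscr{T}$ to concentrate on the ray through its conditional mean; applied to the pairs $(P_\theta,P_i)$ with the pivotal generators this gives $(p_\theta,p_i)=R(\bar p_\theta,\bar p_i)$ $\lambda$-a.e. for a nonnegative $R$. On $\{\bar p_\theta>0\}$ the factor must be $R=p_\theta/\bar p_\theta$, independent of $i$, so $p_i=R\bar p_i$ for every $i$; combining $\sum_i c_i p_i = 1$ with $\sum_i c_i\bar p_i = E_\lambda[\sum_i c_i p_i\mid\mathscr{T}]=1$ yields $R=1$, hence $p_\theta=\bar p_\theta$ there, while on $\{\bar p_\theta=0\}$ nonnegativity forces $p_\theta=0=\bar p_\theta$. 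Therefore $p_\theta=\bar p_\theta=g_\theta\circ T$ $\lambda$-a.e., which is (iii). The delicate points I expect to spend most effort on are the precise equality condition in conditional Jensen for the homogeneous $F$, including the boundary behaviour encoded by the $f^\ast(0)$ conventions, and the justification of interchanging the infinite sum with the conditional expectation.
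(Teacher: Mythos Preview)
Your proof is correct but follows a genuinely different route from the paper. The paper treats the equivalence (i)$\Leftrightarrow$(ii)$\Leftrightarrow$(iv) as classical and simply cites Halmos--Savage, Kullback--Leibler, Bahadur, and Liese--Vajda for it; the only thing it actually proves is (ii)$\Rightarrow$(iii), via the two-line computation $P_\theta^T(F)=\int \chi_F(T)g_\theta(T)\,d\lambda=\int \chi_F g_\theta\,d\lambda^T$, which identifies $g_\theta$ with $dP_\theta^T/d\lambda^T$ (and (iii)$\Rightarrow$(ii) is noted as obvious). By contrast, you give a self-contained argument for every implication, the substantive part being (iv)$\Rightarrow$(iii) through the equality case of conditional Jensen for the perspective function $F(a,b)=bf(a/b)$, combined with the pivotal decomposition $\sum_i c_i p_i=1$ to pin down the scaling factor $R$. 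What the paper's approach buys is economy: the hard analytic content (equality in the $f$-divergence data-processing inequality) is outsourced to the cited literature. What your approach buys is transparency about \emph{why} strict convexity of $f$ and pivotality of $\lambda$ are the right hypotheses, and it makes explicit the mechanism---collapse of the conditional law onto a ray, then normalization via $\sum c_i p_i=1$---that the references package as a black box. The care points you flag (boundary behaviour at $b=0$ via $f^\ast(0)$, and interchanging $\sum_i c_i$ with $E_\lambda[\cdot\mid\mathscr{T}]$, which is just monotone convergence for nonnegative terms) are genuine but routine.
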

\begin{proof}
The equivalence (i)$\Leftrightarrow$(ii)$\Leftrightarrow$(iv) is 
well-known~\cite{halmos1949application,kullbackleibler1951,bahadur1954,1705001}.
The implication (iii)$\Rightarrow$(ii) is obvious.
Let us show (ii)$\Rightarrow$(iii).
For each $F \in \mathscr{B}_T$, we have
\begin{align*}
	P_\theta^T (F)
	&=
	\int_{\Omega}
	\chi_F(T(x))
	\frac{ dP_\theta }{ d \lambda  } (x)
	d \lambda(x)
	\\
	&=
	\int_{\Omega}
	\chi_F(T(x))
	g_\theta (T(x))
	d \lambda(x)
	\\
	&=
	\int_{\Omega_T}
	\chi_F(t)
	g_\theta (t)
	d \lambda^T(t),
\end{align*}
where $\chi_F (\cdot)$ is the indicator function for $F.$
This implies that $g_\theta (t) = d P_\theta^T /d \lambda^T (t)$ $\lambda^T$-a.e.
and we obtain (iii).
\end{proof}

Let $\mathcal{P}$ be a statistical model on $(\Omega , \mathscr{B})$.
A statistic 
$T \in \mathbb{M}(  (\Omega , \mathscr{B}) \to (\Omega_T , \mathscr{B}_T) )$
is said to be \textit{minimal sufficient} if 
$T$ is sufficient and for each sufficient statistic 
$S \in   \mathbb{M}(  (\Omega , \mathscr{B}) \to (\Omega_S , \mathscr{B}_S) )$
there exists a map
$f \in \mathbb{M}(  (\Omega_S , \mathscr{B}_S) \to (\Omega_T , \mathscr{B}_T) )$
such that $T(x) = f   (S(x))$
$\mathcal{P}$-a.e.
A minimal sufficient statistic can be interpreted to capture the information about 
the statistical model $\mathcal{P}$ in the least redundant manner.

The following theorem gives a sufficient condition for the existence of a
minimal sufficient statistic.

%%%%%%%%%theorem LSB statistics%%%%%%%%%%%%%%%%%%%%%
\begin{theo}[Lehmann and Scheff\'{e}~\cite{lehmann1950}, Bahadur~\cite{bahadur1954}]
\label{theo:lsb}
Let $\mathcal{P} = \{ P_\theta  \}_{\theta \in \Theta}  $ be a statistical model 
on an outcome space $(\Omega , \mathscr{B})$.
Suppose that there exists a countable subset
$\{ P_{\theta_i}  \}_{i \geq 1} \subset \mathcal{P}$
dense in $\mathcal{P}$ with respect to the following metric:
\begin{equation}
	d (P , Q)
	:=
	\sup_{E \in \mathscr{B} } 
	|
	P(E) - Q(E)
	| .
	\label{eq:ddef}
\end{equation}
Then a statistic 
$T \in \mathbb{M} (  (\Omega , \mathscr{B}) \to  (\mathbb{R}^\infty , \mathscr{B} (\mathbb{R}^\infty) )  )$
defined by
\begin{equation}
	T(x) 
	:=
	\left(
	\frac{ d P_{\theta_i} }{ d\lambda }
	(x)
	\right)_{i \geq 1}
	\in
	\mathbb{R}^\infty
	\notag %\label{eq:lsb}
\end{equation}
is a minimal sufficient statistic.
Here $(\mathbb{R}^\infty , \mathscr{B} (\mathbb{R}^\infty) )$ is the countable product space of the real line
$(\mathbb{R} , \mathscr{B} (\mathbb{R})) $
and $\lambda := \sum_{i \geq 1} c_i P_{\theta_i}$ with $c_i >0$ and $\sum_{i \geq 1} c_i =1$.
\end{theo}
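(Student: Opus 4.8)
The plan is to verify the two defining properties of a minimal sufficient statistic separately: that $T$ is sufficient, and that every sufficient statistic factors through $T$. The common groundwork is to show that the pivotal measure $\lambda$ dominates the \emph{whole} model $\mathcal{P}$, not merely the countable subfamily $\{P_{\theta_i}\}$ used to define it. First I would note that $\lambda \geq c_i P_{\theta_i}$ forces each coordinate density $dP_{\theta_i}/d\lambda$ to be bounded by $1/c_i$, so $T$ is a well-defined $\mathscr{B}/\mathscr{B}(\mathbb{R}^\infty)$-measurable statistic (the product map is measurable because each coordinate is). To obtain $\mathcal{P} \ll \lambda$, suppose $\lambda(E) = 0$; then $P_{\theta_i}(E) = 0$ for all $i$ since $c_i > 0$, and for an arbitrary $P_\theta$ I choose $\theta_{i_k}$ with $d(P_\theta , P_{\theta_{i_k}}) \to 0$, so that $|P_\theta(E) - P_{\theta_{i_k}}(E)| \leq d(P_\theta , P_{\theta_{i_k}}) \to 0$ gives $P_\theta(E) = 0$. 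Thus $\mathcal{P} \ll \lambda$, so $\lambda$ is a pivotal measure for $\mathcal{P}$ and Theorem~\ref{theo:sufstat} applies.

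For sufficiency I would verify the factorization criterion, Theorem~\ref{theo:sufstat}(ii), with the pivotal measure $\lambda$. For the dense points the factorization is immediate, since $dP_{\theta_i}/d\lambda = \pi_i \circ T$ where $\pi_i$ is the $i$-th coordinate projection on $\mathbb{R}^\infty$. For a general $\theta$ the crucial ingredient is the identity $d(P,Q) = \tfrac{1}{2} \int_\Omega |dP/d\lambda - dQ/d\lambda| \, d\lambda$ (valid since $P , Q \ll \lambda$), which converts the hypothesized total-variation density into $L^1(\lambda)$ convergence of Radon--Nikod\'ym derivatives. Choosing $\theta_{i_k}$ with $d(P_\theta , P_{\theta_{i_k}}) \to 0$, the functions $\pi_{i_k} \circ T = dP_{\theta_{i_k}}/d\lambda$ converge to $dP_\theta/d\lambda$ in $L^1(\lambda)$; passing to a subsequence yields $\lambda$-a.e.\ convergence. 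Setting $g_\theta := \liminf_k \pi_{i_k}$, which is $\mathscr{B}(\mathbb{R}^\infty)$-measurable, gives $dP_\theta/d\lambda = g_\theta \circ T$ $\lambda$-a.e., establishing (ii) and hence the sufficiency of $T$.

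For minimality, let $S \in \mathbb{M}((\Omega , \mathscr{B}) \to (\Omega_S , \mathscr{B}_S))$ be any sufficient statistic. Applying Theorem~\ref{theo:sufstat}(ii) to $S$, again with the pivotal measure $\lambda$, produces for each $i$ a $\mathscr{B}_S$-measurable function $g_i^S$ with $dP_{\theta_i}/d\lambda = g_i^S \circ S$ $\lambda$-a.e. I would then bundle these into $f := (g_i^S)_{i \geq 1} \in \mathbb{M}((\Omega_S , \mathscr{B}_S) \to (\mathbb{R}^\infty , \mathscr{B}(\mathbb{R}^\infty)))$, which is measurable because each coordinate is. Coordinatewise, $f \circ S = (g_i^S \circ S)_{i \geq 1} = (dP_{\theta_i}/d\lambda)_{i \geq 1} = T$ holds $\lambda$-a.e., and since $\mathcal{P} \ll \lambda$ every $\lambda$-null exceptional set is $P_\theta$-null for all $\theta$, so $T = f \circ S$ $\mathcal{P}$-a.e., as the definition of minimal sufficiency demands.

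The main obstacle is the general-$\theta$ step in the sufficiency argument; the rest is bookkeeping with the factorization criterion and with the passage from $\lambda$-a.e.\ to $\mathcal{P}$-a.e. There one must correctly reduce the total-variation denseness hypothesis to $L^1(\lambda)$ convergence of densities, and, more delicately, exhibit a \emph{single} $\mathscr{B}(\mathbb{R}^\infty)$-measurable function $g_\theta$ representing the limiting density \emph{as a function of $T$}. This is precisely why I would pass to an a.e.-convergent subsequence and take a pointwise $\liminf$ on $\mathbb{R}^\infty$ rather than work with the abstract $L^1$ limit, which is only defined up to $\lambda$-null sets and is not manifestly of the required form $g_\theta \circ T$.
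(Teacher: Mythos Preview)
The paper does not supply its own proof of Theorem~\ref{theo:lsb}; it is quoted as a classical result with references to Lehmann--Scheff\'e and Bahadur. Your argument is correct and is essentially the standard proof one finds in that literature: establish $\mathcal{P}\ll\lambda$ from the density hypothesis, verify the factorization criterion (Theorem~\ref{theo:sufstat}(ii)) for every $\theta$ by $L^1(\lambda)$-approximating $dP_\theta/d\lambda$ through the coordinate functions $\pi_i\circ T$ and extracting an a.e.\ limit via a $\mathscr{B}(\mathbb{R}^\infty)$-measurable $\liminf$, and then obtain minimality by bundling the factorization functions $g_i^S$ into a single map $f\colon\Omega_S\to\mathbb{R}^\infty$. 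The only points worth a remark are cosmetic: the countable union of the exceptional $\lambda$-null sets (one for each $i$) in the minimality step should be mentioned explicitly, and your observation that $dP_{\theta_i}/d\lambda\le 1/c_i$ is pleasant but not actually needed anywhere in the argument.
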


%%%%%%%%%%%%%%%%%%%%%%%%%subsection povm%%%%%%%%%%%%%%%%%%%%%%%%%%%%%
\subsection{Positive-operator valued measure}
Throughout this paper, we fix a \textit{separable} (i.e. $\dim \mathcal{H} \leq \aleph_0$)
Hilbert space $\mathcal{H}$ and denote the set of bounded operators on 
$\mathcal{H}$
by
$\mathcal{L}(\mathcal{H}).$
A positive trace class operator $\rho$ with unit trace is called a \textit{state} on $\mathcal{H}$,
and the set of states on $\mathcal{H}$ is denoted by $\mathcal{S} (\mathcal{H}).$

%\begin{defi}
%\label{defi:povm}
Let $(\Omega, \mathscr{B})$ be a measurable space.
A positive-operator valued measure (POVM) $A$ is a mapping
$A \colon \mathscr{B} \to \mathcal{L} (\mathcal{H})$
such that
\begin{enumerate}
\item[(i)]
$A(E) \geq O$ for each $E \in \mathscr{B} ; $
\item[(ii)]
$A(\Omega) = I$;
\item[(iii)]
for each countable disjoint family $\{ E_i \} \subset \mathscr{B}$,
$A(\bigcup_i E_i  ) = \sum_{i} A( E_i ) ,$
where the RHS is convergent in the sense of the weak operator topology. 
\end{enumerate}
Here, $O$ and $I$ are the zero and identity operators, respectively.
The triple $(\Omega, \mathscr{B} , A)$ is also called a POVM.

%%%%%%%%%%%%%%%%%%discrete povm erased%%%%%%%%%%%%%%
%A POVM $(\Omega , \mathscr{B} , A)$ is said to be discrete if 
%$\Omega$ is a countable set and $\mathscr{B}$ is the power set $\mathcal{P} (\Omega)$
%of $\Omega$.
%We identify a discrete POVM $(\Omega , \mathscr{B} , A)$
%with the set of operators $ A_x :=   A(  \{ x \} ) $
%$(x \in \Omega)$
%with the completeness condition $\sum_{x \in \Omega} A_x = I. $

%\end{defi}

Let $(\Omega , \mathscr{B} , A)$ be a POVM.
For each $\rho \in \mathcal{S} (\mathcal{H} )$
we define a probability measure $P^A_{\rho} (\cdot)$
with the outcome space $(\Omega , \mathscr{B})$ by
$P^A_{\rho} (E) := \tr [ \rho A(E) ]$
for each $E \in \mathscr{B} .$
Then $A$ induces a natural statistical model 
$\mathcal{P}^A := \{ P^A_\rho  \}_{\rho \in \mathcal{S} (\mathcal{H})}$,
which is the set of possible outcome distributions when we perform the measurement $A$.

%%%%%%%%%%%definition of \rho_i and \rho_0 from the separability of \mathcal{H}.%%%%%%%%%%%%%%
From the separability of $\mathcal{H}$,
there exists a sequence $\{ \rho_i \}_{i\geq 1}$ in $\stsp$ 
dense with respect to the trace norm topology.
Taking arbitrary $\{ c_i \}_{i\geq 1}$ such that
$
	c_i>0
$
and
$
	\sum_{i \geq 1} c_i = 1,
$
e.g. $c_i = 2^{-i},$
we define a state $\rho_\ast := \sum_{i \geq 1} c_i \rho_i .$
Throughout this paper, we fix such $\{  \rho_i \}_{i \geq 1}$ 
and $\rho_\ast$.

From the definition of $\rho_\ast ,$ 
$P^A_{\rho_\ast} = \sum_{i\geq 1} c_i P^A_{\rho_i}$
and
the following proposition immediately follows.
\begin{prop}
\label{prop:aconti}
Let $(\Omega , \mathscr{B} , A)$ be a POVM.
Then $\mathcal{P}^A \ll P^A_{\rho_\ast}$,
i.e. 
$P^A_{\rho_\ast}$ is a pivotal measure for $\mathcal{P}^A .$
%$P^A_\rho \ll P^A_{\rho_\ast}$
%for each
%$\rho \in \stsp .$
\end{prop}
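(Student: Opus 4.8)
The plan is to unwind the definition of absolute continuity and reduce everything to the trace-norm density of $\{ \rho_i \}_{i \geq 1}$ in $\stsp$. To establish $\mathcal{P}^A \ll P^A_{\rho_\ast}$ I must verify that for an arbitrary state $\rho \in \stsp$ one has $P^A_\rho \ll P^A_{\rho_\ast}$. So I would fix $\rho \in \stsp$ together with a set $E \in \mathscr{B}$ satisfying $P^A_{\rho_\ast}(E) = 0$, and aim to conclude $P^A_\rho(E) = 0$.

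First I would exploit the explicit decomposition $P^A_{\rho_\ast} = \sum_{i \geq 1} c_i P^A_{\rho_i}$. Since $\rho_i \geq O$ and $A(E) \geq O$, each summand $c_i P^A_{\rho_i}(E) = c_i \tr [ \rho_i A(E) ]$ is nonnegative, so the vanishing of the whole sum together with $c_i > 0$ forces $P^A_{\rho_i}(E) = \tr [ \rho_i A(E) ] = 0$ for every $i$.

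The remaining step is to transfer this vanishing from the dense family to the arbitrary state $\rho$. The key observation is that the functional $\sigma \mapsto \tr [ \sigma A(E) ]$ is continuous on $\stsp$ with respect to the trace norm. Indeed, from $A(\Omega) = I$ and $A(\Omega \setminus E) \geq O$ one obtains $O \leq A(E) \leq I$, hence $\| A(E) \| \leq 1$, and therefore $| \tr [ (\rho - \sigma) A(E) ] | \leq \| \rho - \sigma \|_1 \, \| A(E) \| \leq \| \rho - \sigma \|_1$. Choosing a subsequence $\rho_{i_k} \to \rho$ in trace norm (possible by density) and passing to the limit yields $P^A_\rho(E) = \tr [ \rho A(E) ] = \lim_k \tr [ \rho_{i_k} A(E) ] = 0$, which is exactly the desired conclusion.

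There is essentially no hard part here: the content is merely the interplay between the operator inequality $A(E) \leq I$, which guarantees the uniform bound $\| A(E) \| \leq 1$, and the trace-norm density of $\{ \rho_i \}_{i \geq 1}$, which is precisely what allows a single countable family of states to dominate the entire model. If I had to isolate the only point requiring any care, it would be the verification that $A(E) \leq I$ (so that the continuity estimate is uniform in $E$), but this is immediate from the finite additivity and positivity of the POVM.
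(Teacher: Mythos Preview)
Your argument is correct. The paper itself does not spell out a proof: it merely records the identity $P^A_{\rho_\ast} = \sum_{i \geq 1} c_i P^A_{\rho_i}$ and declares that the proposition ``immediately follows.'' Your write-up supplies exactly the missing step---namely, passing from $P^A_{\rho_i}(E)=0$ for all $i$ to $P^A_\rho(E)=0$ for an arbitrary $\rho$ via the trace-norm density of $\{\rho_i\}$ and the bound $\|A(E)\|\leq 1$---and this is precisely the content that the paper's one-line justification suppresses. (An equivalent route, which the paper may also have had in mind, is to note that the density of $\{\rho_i\}$ forces $\rho_\ast$ to be faithful, whence $\tr[\rho_\ast A(E)]=0$ implies $A(E)=O$; your version is just as direct and avoids the detour through faithfulness.)
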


Due to the above proposition, the notions of $A$-a.e., $\mathcal{P}^A$-a.e., and $P^A_{\rho_\ast}$-a.e.
coincide,
and thus we will use them interchangeably.

\subsection{Fuzzy preorder and equivalence relations among POVMs}
Let $(\Omega_1 , \mathscr{B}_1)$ and $(\Omega_2 , \mathscr{B}_2)$
be measurable spaces.
A mapping 
$\kappa(\cdot | \cdot) \colon 
\mathscr{B}_1 \times \Omega_2 
\to
[0,1]
$
is called a \textit{regular Markov kernel} if
\begin{enumerate}
\item[(i)]
$\kappa (E | \cdot)$ is $\mathscr{B}_2$-measurable for each $E \in \mathscr{B}_1$;
\item[(ii)]
$\kappa (\cdot | y )$ is a probability measure for each $y \in \Omega_2 .$ 
\end{enumerate}

Let $(\Omega_1 , \mathscr{B}_1)$ be a measurable space
and let $(\Omega_2, \mathscr{B}_2 , A_2)$ be a POVM.
An $A_2$-\textit{weak Markov kernel} is a mapping 
$\kappa(\cdot | \cdot) \colon 
\mathscr{B}_1 \times \Omega_2 
\to
\mathbb{R}
$
such that
\begin{enumerate}
\item[(i)]
$\kappa (E | \cdot)$ is $\mathscr{B}_2$-measurable for each $E \in \mathscr{B}_1$;
\item[(ii)]
$ 0 \leq   \kappa (E | y)  \leq 1 $ $A_2$-a.e.
for each $E \in \mathscr{B}_1$;
\item[(iii)]
$\kappa (\Omega_1 | y) =1$ $A_2$-a.e.
and
$\kappa (\emptyset | y) =0$ $A_2$-a.e.;
\item[(iv)]
$\kappa (\cup_i E_i | y) = \sum_i \kappa (E_i | y)$
$A_2$-a.e.
for each countable and disjoint $\{  E_i \} \subset \mathscr{B}_1.$
\end{enumerate}

By using the concepts of the regular and weak Markov kernels, 
we introduce fuzzy preorder and equivalence relations as follows~\cite{Dorofeev1997349,Heinonen200577,Jencova2008}.

%%%%%%%%%%%%%%%%Definition: preorder and equivalence relations
\begin{defi}
\label{defi:relations}
Let $(\Omega_A , \mathscr{B}_A , A)$ and $(\Omega_B , \mathscr{B}_B , B)$ be 
POVMs.
\begin{enumerate}[(i)]
\item
If there exists a regular Markov kernel 
$
\kappa (\cdot| \cdot) 
\colon
\mathscr{B}_A \times \Omega_B
\to
[0,1]
$
such that
\begin{equation}
	A(E)
	=
	\int_{\Omega_B}
	\kappa (E|y)
	d B(y)
	\quad
	(E \in \mathscr{B}_A),
	\label{eq:fuzzy}
\end{equation}
then we say that $A$ is regularly fuzzier than $B$, denoted by $A \preceq_r B.$
\item
If there exists a $B$-weak Markov kernel 
$
\kappa (\cdot| \cdot) 
\colon
\mathscr{B}_A \times \Omega_B
\to
\mathbb{R}
$
such that the condition~\eqref{eq:fuzzy} holds,
then we say that
$A$ is weakly fuzzier than $B$, denoted by $A \preceq_w B .$
\item
If $A \preceq_r B$ and $B \preceq_r A$, then $A$ and $B$ are said to be regularly fuzzy equivalent,
denoted by $A \simeq_r B.$
\item
If $A \preceq_w B$ and $B \preceq_w A$, then $A$ and $B$ are said to be weakly fuzzy equivalent,
denoted by $A \simeq_w B.$
\end{enumerate}
\end{defi}
Intuitively the relations $A\preceq_r B$ and $A\preceq_w B$ mean that
the measurement of $A$ can be realized by a classical post-processing of
the measurement of $B$.
Apparently, $A \preceq_r B$ (resp. $A \simeq_r B$) implies $A \preceq_w B$ (resp. $A \simeq_w B$). 
It is known~\cite{Dorofeev1997349,Heinonen200577,jencova2009} 
that the regular and weak relations 
$\simeq_r$ and $\simeq_w$ 
(resp. $\preceq_r$ and $\preceq_w$)
are equivalence relations
(resp. preorder relations).
Apparently, the regular relation $A \simeq_r B$
implies the weak relation $A \simeq_w B ,$
while the converse does not necessarily hold.
See Appendix~\ref{sec:app1} for an explicit example of POVMs
that are weakly fuzzy equivalent but not regularly fuzzy equivalent.

A standard Borel space~\cite{srivastava1998course} is 
a measurable space Borel isomorphic to a complete separable metric space.
We call a POVM with a standard Borel outcome space a standard Borel POVM.
By further assuming the standard Borel properties of POVMs,
the weak relations imply the corresponding regular relations as in the following proposition.

%%%%%%%%%%%%%%Prop relations for standard Borel POVMs
\begin{prop}[Remark~4.1 of Ref.~\onlinecite{Jencova2008}]
\label{prop:wimplyr}
Let $A$ and $B$ be POVMs.
\begin{enumerate}
\item
If $A$ is standard Borel,
$A \preceq_w B \Leftrightarrow  A \preceq_r B .$
\item
If $A$ and $B$ are standard Borel,
$A \simeq_w B \Leftrightarrow  A \simeq_r B .$
\end{enumerate}
\end{prop}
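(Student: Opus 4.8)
The implication $A \preceq_r B \Rightarrow A \preceq_w B$ (and likewise for $\simeq$) is immediate, since a regular Markov kernel satisfies all the defining a.e.\ conditions of a $B$-weak Markov kernel; this was already noted above the proposition. Thus the genuine content is the reverse implication of part~(1): that $A \preceq_w B$ together with $A$ standard Borel forces $A \preceq_r B$. Part~(2) will then follow by applying part~(1) to each of the two directions $A \preceq_w B$ and $B \preceq_w A$. So I would fix a $B$-weak Markov kernel $\kappa$ realizing $A \preceq_w B$, and the plan is to \emph{regularize} $\kappa$ into a genuine (regular) Markov kernel $\hat{\kappa}$ that still satisfies~\eqref{eq:fuzzy}. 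If $\Omega_A$ is countable this is elementary: set $p_a(y) := \kappa(\{a\}|y)$, observe that $B$-a.e.\ the family $\{ p_a(y) \}_a$ is a probability distribution, and on the exceptional $B$-null set redefine $\hat{\kappa}(\cdot|y)$ to be a fixed point mass. In the uncountable case, by Kuratowski's theorem every uncountable standard Borel space is Borel isomorphic to $[0,1]$, so fixing such an isomorphism $\phi \colon \Omega_A \to [0,1]$ and pushing $\kappa$ forward to $\kappa'(G|y) := \kappa(\phi^{-1}(G)|y)$ we may assume $\Omega_A = [0,1]$.

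First I would build the regular version on $[0,1]$. For each rational $r \in [0,1]$ put $F(r|y) := \kappa([0,r]|y)$, a $\mathscr{B}_B$-measurable function of $y$. Applying the weak-kernel axioms (ii)--(iv) to the (countably many) rational intervals, their rational differences $[0,s] \setminus [0,r] = (r,s]$, and the sequences $[0,r_n] \downarrow [0,r]$ and $[0,r_n] \uparrow [0,1]$, one obtains a single $B$-null set $N$ outside of which $r \mapsto F(r|y)$ is nonnegative, nondecreasing, right-continuous along the rationals, and has the correct limits $0$ and $1$. For $y \notin N$ let $\hat{\kappa}(\cdot|y)$ be the Lebesgue--Stieltjes probability measure on $[0,1]$ determined by the right-continuous extension $t \mapsto \inf\{ F(r|y) : r \in \mathbb{Q},\, r \geq t \}$, and for $y \in N$ set $\hat{\kappa}(\cdot|y) := \delta_0$. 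Then $\hat{\kappa}(\cdot|y)$ is a probability measure for every $y$, and $y \mapsto \hat{\kappa}(E|y)$ is measurable: this holds for $E = [0,t]$, and the collection of $E$ for which it holds is a $\lambda$-system containing the $\pi$-system of intervals, hence all of $\mathscr{B}([0,1])$. Thus $\hat{\kappa}$ is a regular Markov kernel agreeing with $\kappa'$ $B$-a.e.\ on the generating rational intervals.

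It then remains to propagate this agreement to all Borel sets and to transfer~\eqref{eq:fuzzy}. Here I would run a Dynkin argument: the family $\mathcal{D} := \{ E : \hat{\kappa}(E|\cdot) = \kappa'(E|\cdot)\ B\text{-a.e.} \}$ contains the interval $\pi$-system and is a $\lambda$-system, because for any fixed increasing sequence or proper difference the genuine measure $\hat{\kappa}(\cdot|y)$ and the $B$-a.e.\ finite and countable additivity of $\kappa'$ yield matching values off a single null set; hence $\mathcal{D} = \mathscr{B}([0,1])$. Consequently $\int \hat{\kappa}(E|y)\, dB(y) = \int \kappa'(E|y)\, dB(y)$ for every $E$, and pulling back through $\phi$ produces a regular Markov kernel on $\Omega_A$ satisfying~\eqref{eq:fuzzy}, i.e.\ $A \preceq_r B$. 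Applying this to both directions $A \preceq_w B$ and $B \preceq_w A$ gives $A \simeq_r B$, and with the trivial converse this establishes part~(2).

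The one genuinely delicate point is the regularization step: producing a single kernel that is a bona fide probability measure for \emph{every} $y$, which is impossible without the standard Borel hypothesis on $\Omega_A$ and which is precisely what forces the reduction to $[0,1]$ together with the bookkeeping ensuring that all countably many a.e.\ conditions can be imposed off one common $B$-null set $N$. Everything after that is the standard monotone-class transfer of the integral identity.
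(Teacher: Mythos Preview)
Your argument is correct and complete: the reduction to $[0,1]$ via Kuratowski, the distribution-function construction of a regular version off a single $B$-null set assembled from the countably many rational a.e.\ conditions, and the $\pi$--$\lambda$ propagation of the a.e.\ equality to all Borel sets is exactly the standard regularization of a weak kernel into a regular conditional probability, and it delivers~\eqref{eq:fuzzy} for $\hat{\kappa}$ as needed. Part~(2) indeed follows by applying part~(1) in each direction, using that both outcome spaces are standard Borel.

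There is nothing to compare against in the paper: Proposition~\ref{prop:wimplyr} is quoted from Remark~4.1 of Ref.~\onlinecite{Jencova2008} and no proof is given here. Your write-up is essentially the argument one would expect behind that remark, and the key point you identify---that the standard Borel hypothesis on $\Omega_A$ is precisely what allows the countably many a.e.\ exceptions to be absorbed into a single null set and then turned into a genuine probability measure via the Lebesgue--Stieltjes construction---is exactly the crux. One small cosmetic remark: your $\pi$-system $\{[0,r]:r\in\mathbb{Q}\cap[0,1]\}$ does generate $\mathscr{B}([0,1])$, but it is worth saying so explicitly, since the Dynkin step needs both closure under intersection (clear) and generation (via $(a,b]=[0,b]\setminus[0,a]$).
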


In the formulation of the minimal sufficient POVM, 
we will use the regular and weak fuzzy equivalence relations.

%For the weak relations $\preceq_w$ and $\simeq_w$, 
%we will prove the preorder and equivalence properties in Theorem~\ref{theo:preorder}
%by using the separability of the Hilbert space $\mathcal{H}.$
%In the formulation of the minimal sufficiency of a POVM,
% we will use the weak relations,
%while in Sec.~\ref{subsec:structure} we will see that the weak and regular relations are actually 
%almost the same in a certain sense due to the separability of $\mathcal{H} .$

%%%%%%%%subsection sufficient stats for POVM%%%%%%%%%%%
\subsection{Sufficient statistics for POVM}
In this subsection, we consider the sufficiency condition of a statistic
for a POVM.
We also show Lemma~\ref{lemm:joint} 
which will be used in Sec.~\ref{sec:minimal}.

Let $(\Omega_A , \mathscr{B}_A ,A)$ be a POVM
and let $T \in \mathbb{M} ((\Omega_A , \mathscr{B}_A ) \to (\Omega_T , \mathscr{B}_T))$
be a statistic.
We define a POVM $A_T$ with the outcome space $(\Omega_T, \mathscr{B}_T)$
by $A_T (\cdot) := A (T^{-1} (\cdot)).$
Since $A_T(\cdot)$ can be written as
\[
	A_T(F)
	=
	\int_{\Omega_A} \chi_F (T(x)) 
	d A(x)
	\quad (E \in \mathscr{B}_T),
\]
we have $A_T \preceq_r A.$
$T$ is said to be a sufficient statistic for $A$ if $T$ is sufficient for 
the statistical model $\mathcal{P}^A = \{  P^A_\rho \}_{\rho \in \stsp}$.

The following lemma, the monotonicity of the $f$-divergence,
states that the $f$-divergence is monotonically decreasing by the classical post-processing.

\begin{lemm}[Theorem~7.1 of Ref.~\onlinecite{Jencova2008}]
\label{lemm:Df}
Let $A$ and $B$ be POVMs.
\begin{enumerate}[(i)]
\item
$A \preceq_w  B$ implies
$D_f (P^A_\rho , P^A_\sigma ) \leq D_f (P^B_\rho , P^B_\sigma )$
$(\forall \rho , \forall \sigma \in \stsp ).$
\item
$A \simeq_w  B$ implies
$D_f (P^A_\rho , P^A_\sigma ) = D_f (P^B_\rho , P^B_\sigma )$
$(\forall \rho  , \forall \sigma \in \stsp ).$
\end{enumerate}
\end{lemm}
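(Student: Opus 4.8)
The plan is to reduce the statement to the classical data-processing inequality for the $f$-divergence under a Markov kernel, which is where the hypothesis $A \preceq_w B$ enters. First I would translate the weak fuzzy preorder into a relation between the induced output distributions. Let $\kappa(\cdot|\cdot) \colon \mathscr{B}_A \times \Omega_B \to \mathbb{R}$ be a $B$-weak Markov kernel realizing $A(E) = \int_{\Omega_B} \kappa(E|y)\, dB(y)$. Writing a state spectrally as $\rho = \sum_k w_k \ket{\psi_k}\bra{\psi_k}$ and applying $\tr[\rho\,\cdot\,]$ to this identity, I would interchange the sum over $k$ with the weak operator integral to obtain
\begin{equation*}
	P^A_\rho (E) = \int_{\Omega_B} \kappa(E|y)\, dP^B_\rho (y) \qquad (E \in \mathscr{B}_A).
\end{equation*}
This interchange is justified by positivity and Tonelli's theorem, since $\kappa(E|\cdot) \ge 0$ $B$-a.e.\ and each diagonal element $\bra{\psi_k} B(\cdot) \ket{\psi_k}$ is a positive measure, rather than by any continuity argument. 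The $a.e.$ clauses in the definition of a weak Markov kernel cause no trouble here: by Proposition~\ref{prop:aconti} every $P^B_\rho$ is absolutely continuous with respect to $P^B_{\rho_\ast}$, so the relations $0 \le \kappa(E|y) \le 1$, $\kappa(\Omega_B|y)=1$, and the finite additivity over a partition all hold $P^B_\rho$-a.e.\ simultaneously for all $\rho$.

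The core step is the classical monotonicity, for which I would invoke the variational characterization of the $f$-divergence,
\begin{equation*}
	D_f(P,Q) = \sup \sum_{i=1}^{n} Q(F_i)\, f\!\left( \frac{P(F_i)}{Q(F_i)} \right),
\end{equation*}
the supremum ranging over all finite measurable partitions $\{F_i\}$ of the outcome space (with the zero conventions fixed in the excerpt); this follows from Jensen's inequality for the coarse-graining bound together with an approximation of $p/q$ by its level sets. Fixing $\rho, \sigma \in \stsp$, a finite partition $\{F_i\}$ of $\Omega_A$, and a measure $\mu$ dominating $P^B_\rho, P^B_\sigma$ with densities $p,q$, I would introduce for each cell the nonnegative measure $\nu_i(dy) := \kappa(F_i|y)\, q(y)\, d\mu(y)$ of total mass $P^A_\sigma(F_i)$ and rewrite $P^A_\rho(F_i) = \int (p/q)\, d\nu_i$. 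Jensen's inequality for the convex $f$ against the normalized $\nu_i$ then yields the cell-wise bound
\begin{equation*}
	P^A_\sigma(F_i)\, f\!\left( \frac{P^A_\rho(F_i)}{P^A_\sigma(F_i)} \right) \le \int_{\Omega_B} f\!\left( \frac{p(y)}{q(y)} \right) \kappa(F_i|y)\, q(y)\, d\mu(y).
\end{equation*}
Summing over $i$ and using $\sum_i \kappa(F_i|y) = 1$ $P^B_\sigma$-a.e.\ collapses the right-hand side to $D_f(P^B_\rho, P^B_\sigma)$, and taking the supremum over partitions of $\Omega_A$ gives (i). Part (ii) is then immediate, since $A \simeq_w B$ means both $A \preceq_w B$ and $B \preceq_w A$, supplying the two opposite inequalities.

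I expect the main obstacle to be the measure-theoretic bookkeeping forced by the \emph{weak} (as opposed to regular) Markov kernel: $\kappa$ is defined only up to $B$-null sets and need not be countably additive or $[0,1]$-valued pointwise, so every manipulation must be phrased $P^B_{\rho_\ast}$-a.e., and the cell-wise Jensen step must be checked to respect the conventions $0 \cdot f^\ast(0) = 0$ and $0 \cdot f(p/0) = p\, f^\ast(0)$ on the sets where $q$ or the mass of $\nu_i$ vanishes — precisely the cells where $P^A_\sigma(F_i)=0$ while $P^A_\rho(F_i)>0$. Establishing the variational characterization itself, if not simply cited, is routine but must be carried out with the same care about the value-$\infty$ and zero-denominator conventions, which is exactly what guarantees that the boundary contributions obey the inequality rather than violate it.
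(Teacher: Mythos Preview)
The paper does not give its own proof of this lemma: it is quoted verbatim as Theorem~7.1 of Ref.~\onlinecite{Jencova2008} and used as a black box. Your proposal therefore cannot be compared to a proof in the paper, only assessed on its own merits.

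Your argument is correct and is essentially the standard route to the data-processing inequality for $f$-divergences. The two ingredients you isolate are the right ones: (a) the passage from the operator identity $A(E)=\int \kappa(E|y)\,dB(y)$ to the scalar identity $P^A_\rho(E)=\int \kappa(E|y)\,dP^B_\rho(y)$, justified by Tonelli together with Proposition~\ref{prop:aconti} so that the $B$-a.e.\ exceptional sets of the weak kernel are null for every $P^B_\rho$; and (b) the partition/variational form of $D_f$ combined with Jensen's inequality applied cell-wise against the auxiliary measures $\nu_i$. Your identification of the delicate point---that the weak kernel is only defined modulo $B$-null sets and that the zero-denominator conventions must be respected on cells with $P^A_\sigma(F_i)=0$---is accurate, and these are precisely the places where a careless write-up would break. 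One small slip: you write $\kappa(\Omega_B|y)=1$, but the kernel takes its first argument in $\mathscr{B}_A$, so this should read $\kappa(\Omega_A|y)=1$; the later use of $\sum_i \kappa(F_i|y)=1$ for a partition $\{F_i\}$ of $\Omega_A$ is stated correctly.
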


The next theorem characterizes the sufficiency of a statistic for a POVM.

%%%%%%%%%%%%%% Theorem: povm and sufficient statistic %%%%%%%%%%%%%%%%
\begin{theo}
\label{theo:povmsuff}
Let $(\Omega_A , \mathscr{B}_A ,A)$ be a POVM
and let $T \in \mathbb{M} ((\Omega_A , \mathscr{B}_A ) \to (\Omega_T , \mathscr{B}_T))$
be a statistic.
Then the following conditions are equivalent.
\begin{enumerate}[(i)]
\item
$T$ is sufficient for $A$;
\item
for each $\rho \in \stsp$ there exists a $\mathscr{B}_T$-measurable map $g_\rho (\cdot)$
such that
\[
	\frac{  d P^A_\rho }{d P^A_{\rho_\ast}}
	(x)
	=
	g_\rho ( T(x))
	\quad
	A\text{-}\mathrm{a.e.};
\]
\item
for each $\rho \in \stsp$,
\[
	\frac{  d P^A_\rho }{d P^A_{\rho_\ast}}
	(x)
	=
	\frac{  d P^{A_T}_\rho }{d P^{A_T}_{\rho_\ast}}
	(T(x))
	\quad
	A\text{-}\mathrm{a.e.};
\]
\item
for each $\rho , \sigma \in \stsp$,
$D_f (P^A_\rho , P^A_{\sigma})  = D_f(P^{A_T}_\rho , P^{A_T}_{\sigma}) ;$
\item
$A \simeq_w A_T .$
\end{enumerate}
Furthermore, if $A$ is a standard Borel POVM the above conditions are equivalent to
\begin{enumerate}
\item[(vi)]
$A \simeq_r A_T .$
\end{enumerate}
\end{theo}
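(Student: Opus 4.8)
The plan is to obtain the equivalence of the classical conditions (i)--(iv) by a direct application of Theorem~\ref{theo:sufstat} to the statistical model $\mathcal{P}^A$, and then to bridge these to the operational fuzzy-equivalence conditions (v) and (vi). First I would invoke Theorem~\ref{theo:sufstat} for the dominated model $\mathcal{P}^A = \{ P^A_\rho \}_{\rho \in \stsp}$ with pivotal measure $\lambda := P^A_{\rho_\ast}$, which is legitimate by Proposition~\ref{prop:aconti}. Using the identifications $\lambda^T = P^{A_T}_{\rho_\ast}$ and $(P^A_\rho)^T = P^{A_T}_\rho$, both immediate from $A_T(\cdot) = A(T^{-1}(\cdot))$, together with the fact that $A$-a.e.\ coincides with $P^A_{\rho_\ast}$-a.e., conditions (i)--(iv) of the present theorem correspond verbatim to conditions (i)--(iv) of Theorem~\ref{theo:sufstat}. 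Hence (i)$\Leftrightarrow$(ii)$\Leftrightarrow$(iii)$\Leftrightarrow$(iv) follows at once.

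Next I would connect these to (v). The implication (v)$\Rightarrow$(iv) is immediate from the monotonicity of the $f$-divergence, Lemma~\ref{lemm:Df}(ii). For the reverse direction I exploit that $A_T \preceq_r A$ holds unconditionally, as noted just before the theorem; consequently $A_T \preceq_w A$, and so $A \simeq_w A_T$ is equivalent to the single relation $A \preceq_w A_T$. It therefore suffices to derive $A \preceq_w A_T$ from sufficiency. Assuming (i), the definition of sufficiency furnishes, for each $E \in \mathscr{B}_A$, a $\mathscr{B}_T$-measurable common conditional probability $P(E|\cdot)$ satisfying $P^A_\rho(E|t) = P(E|t)$ $P^{A_T}_\rho$-a.e.\ for every $\rho \in \stsp$, and I would take $\kappa(E|t) := P(E|t)$ as the candidate weak Markov kernel.

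The main work is then to check that $\kappa$ is an $A_T$-weak Markov kernel realizing the integral representation~\eqref{eq:fuzzy}. The four defining properties hold $A_T$-a.e., since they hold pointwise for conditional probabilities (measurability, the bounds $0 \le \kappa \le 1$, normalization, and countable additivity), after passing to the pivotal measure $P^{A_T}_{\rho_\ast}$ with which $A_T$-a.e.\ coincides by Proposition~\ref{prop:aconti} applied to $A_T$. For the representation, the defining property of the conditional probability with $F = \Omega_T$ gives $P^A_\rho(E) = \int_{\Omega_T} P(E|t)\, dP^{A_T}_\rho(t)$ for every $\rho$; rewriting both sides as traces yields $\tr[\rho A(E)] = \tr[\rho \int_{\Omega_T} \kappa(E|t)\, dA_T(t)]$, where interchanging trace and integral is justified since $\kappa(E|\cdot)$ is bounded measurable and $A_T$ is a POVM. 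As this holds for all $\rho \in \stsp$, testing against pure states gives the operator identity $A(E) = \int_{\Omega_T} \kappa(E|t)\, dA_T(t)$, i.e.\ $A \preceq_w A_T$, proving (v). I expect the delicate point to be the bookkeeping of almost-everywhere exceptional sets: confirming that the countable additivity of $\kappa(\cdot|t)$ holds $A_T$-a.e.\ for each countable disjoint family, and reconciling the various null sets relative to $P^{A_T}_\rho$ for different $\rho$ against the pivotal $P^{A_T}_{\rho_\ast}$.

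Finally, for the standard Borel case I would establish (v)$\Leftrightarrow$(vi). Since $A_T \preceq_r A$ holds unconditionally, condition (v) reduces to $A \preceq_w A_T$ and condition (vi) reduces to $A \preceq_r A_T$. When $A$ is standard Borel, Proposition~\ref{prop:wimplyr}(1) gives $A \preceq_w A_T \Leftrightarrow A \preceq_r A_T$, whence (v)$\Leftrightarrow$(vi).
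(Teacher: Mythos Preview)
Your proposal is correct and follows essentially the same route as the paper: the equivalence (i)$\Leftrightarrow$(ii)$\Leftrightarrow$(iii)$\Leftrightarrow$(iv) is obtained by direct appeal to Theorem~\ref{theo:sufstat} with pivotal measure $P^A_{\rho_\ast}$, the bridge to (v) is closed via (i)$\Rightarrow$(v) using the common conditional probability as an $A_T$-weak Markov kernel together with (v)$\Rightarrow$(iv) from Lemma~\ref{lemm:Df}, and the standard Borel case is handled by Proposition~\ref{prop:wimplyr}. Your treatment is somewhat more explicit than the paper's in checking the weak Markov kernel axioms and in passing from the scalar identity $P^A_\rho(E)=\int \kappa(E|t)\,dP^{A_T}_\rho(t)$ to the operator identity, but the underlying argument is identical.
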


\begin{proof}
From Theorem~\ref{theo:sufstat} and Lemma~\ref{prop:aconti},
the equivalence 
(i)$\Leftrightarrow$(ii)$\Leftrightarrow$(iii)$\Leftrightarrow$(iv)
immediately follows.

Let us show (i)$\Rightarrow$(v).
$A_T \preceq_r A  $ is evident from the definition of $A_T.$
Since $T$ is sufficient, 
there exists a conditional probability $P(E|\cdot)$,
which is $\mathscr{B}_T$-measurable for each $E \in \mathscr{B}_A$,
such that
\[
	P^{A}_{\rho} (E)
	=
	\int_{\Omega_T}
	P (E|t)
	\,
	d P^{A_T}_\rho (t) . 
\]
Since $P (\cdot | \cdot )$ satisfies the conditions for the $A_T$-weak Markov kernel
from the definition of the conditional probability,
we obtain $A \preceq_w A_T ,$
and thus (v) holds.

(v)$\Rightarrow$(iv) follows from Lemma~\ref{lemm:Df}.

Let us assume $A$ is a standard Borel POVM.
If we assume (v),
from Proposition~\ref{prop:wimplyr}
we have $A \preceq_r A_T .$
Since $A_T \preceq_r A$ by definition,
we have proved (v)$\Rightarrow$(vi).
The converse (vi)$\Rightarrow$(v) is evident.
\end{proof}

The following lemma assures the existence of a 
POVM corresponding to the joint distribution 
for a given POVM and a regular Markov kernel.

%%%%%%%%%Lemma joint%%%%%%%%%%%%%%

\begin{lemm}
\label{lemm:joint}
Let $(\Omega_1 ,\mathscr{B}_1)$ be a measurable space
and let $(\Omega_2 ,\mathscr{B}_2 , A_2 )$ be a POVM.
Let
$
\kappa (\cdot | \cdot) 
\colon 
\mathscr{B}_1 \times \Omega_2
\to
[0,1]
$
be a regular Markov kernel.
Then the following assertions hold.
\begin{enumerate}%[(i)]
\item
There exists a unique POVM 
$A_{12}$ 
with the product outcome space
$(\Omega_1 \times \Omega_2 , \mathscr{B}_1 \times \mathscr{B}_2)$
such that
\begin{equation}
	A_{12} (E_1 \times E_2 )
	=
	\int_{E_2} 
	\kappa (E_1 | y) 
	d A_2 (y)
	\quad
	(E_1 \in \mathscr{B}_1, \, E_2 \in \mathscr{B}_2).
	\label{eq:kernelprod}
\end{equation}
%We write $A_{12}$ as $\kappa \times A_2 .$
\item
The canonical projection
\[
\pi_2 \colon 
\Omega_1 \times \Omega_2 
\ni (x, y) 
\mapsto
y \in \Omega_2
\]
is a sufficient statistic for $A_{12}.$
\item
$A_{2} = (A_{12})_{\pi_2} \simeq_r A_{12}  .$
\item
For each $\rho \in \stsp ,$
\[
	\frac{ 
	d P^{A_{12}}_{\rho}   
	}{
	d P^{A_{12}}_{\rho_\ast}
	}
	(x,y)
	=
	\frac{ 
	d P^{A_{2}}_{\rho}   
	}{
	d P^{A_{2}}_{\rho_\ast}
	}
	(y)
	\quad
	A_{12}\text{-}\mathrm{a.e.}
\]
\end{enumerate}
\end{lemm}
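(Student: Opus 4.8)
The plan is to build $A_{12}$ one state at a time and then reassemble the operator-valued measure by duality. For each $\rho \in \stsp$, pairing the right-hand side of \eqref{eq:kernelprod} with $\rho$ suggests the set function on rectangles $\nu_\rho(E_1 \times E_2) := \int_{E_2} \kappa(E_1|y)\, dP^{A_2}_\rho(y)$; this is the classical kernel (Fubini) construction, so $\nu_\rho$ extends uniquely to a probability measure on the product $\sigma$-algebra $\mathscr{B}_1 \times \mathscr{B}_2$. First I would observe that on rectangles $\rho \mapsto \nu_\rho(E_1 \times E_2)$ is linear in $\rho$ (since $P^{A_2}_\bullet(F) = \tr[\bullet\, A_2(F)]$ is), and that this linearity passes to every $E \in \mathscr{B}_1 \times \mathscr{B}_2$ by uniqueness of the measure extension. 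Because $0 \le \nu_\rho(E) \le \nu_\rho(\Omega_1 \times \Omega_2) = \tr\rho$ on states, the resulting functional on the trace-class operators $\mathcal{T}(\mathcal{H})$ is bounded, so the duality $\mathcal{T}(\mathcal{H})^\ast \cong \mathcal{L}(\mathcal{H})$ yields a unique $A_{12}(E) \in \mathcal{L}(\mathcal{H})$ with $\tr[\rho A_{12}(E)] = \nu_\rho(E)$. Positivity, the normalization $A_{12}(\Omega_1 \times \Omega_2) = I$, and weak-operator countable additivity then follow by testing against states and against rank-one operators; uniqueness of $A_{12}$ holds because rectangles form a generating $\pi$-system and $\tr[\rho A_{12}(\cdot)]$ are measures. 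This is assertion~1, and I expect the assembly of the scalar data into a genuine POVM to be the main obstacle of the whole lemma.

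For assertion~3, the identity $(A_{12})_{\pi_2} = A_2$ is immediate: $(A_{12})_{\pi_2}(F) = A_{12}(\Omega_1 \times F) = \int_F \kappa(\Omega_1|y)\, dA_2(y) = A_2(F)$, using $\kappa(\Omega_1|y) = 1$. The relation $A_2 = (A_{12})_{\pi_2} \preceq_r A_{12}$ is the fact, noted above, that $A_T \preceq_r A$ for any statistic $T$. For the reverse $A_{12} \preceq_r A_2$ I would exhibit the section kernel $\tilde\kappa(E|y) := \kappa(\{x : (x,y) \in E\}\,|\,y)$ on $(\mathscr{B}_1 \times \mathscr{B}_2) \times \Omega_2$: it is a probability measure in $E$ for each fixed $y$, a Dynkin-class argument shows $y \mapsto \tilde\kappa(E|y)$ is measurable for every $E$, and on rectangles $\tilde\kappa(E_1 \times E_2|y) = \chi_{E_2}(y)\kappa(E_1|y)$ gives $\int_{\Omega_2} \tilde\kappa(E_1 \times E_2|y)\, dA_2(y) = A_{12}(E_1 \times E_2)$, so by the uniqueness in assertion~1 the kernel $\tilde\kappa$ realizes $A_{12}$ from $A_2$ on all of $\mathscr{B}_1 \times \mathscr{B}_2$. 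Hence $A_2 \simeq_r A_{12}$. Assertion~2 then follows at once from Theorem~\ref{theo:povmsuff}: since $A_{12} \simeq_r (A_{12})_{\pi_2}$ implies $A_{12} \simeq_w (A_{12})_{\pi_2}$, condition (v) for the statistic $\pi_2$ holds, which is equivalent to its sufficiency (i).

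Finally, for assertion~4 I would compute the Radon-Nikod\'ym derivative directly, writing $g_\rho(y) := dP^{A_2}_\rho / dP^{A_2}_{\rho_\ast}(y)$. The Fubini property of the kernel construction from assertion~1 gives, for each rectangle, $\int_{E_1 \times E_2} g_\rho(y)\, dP^{A_{12}}_{\rho_\ast}(x,y) = \int_{E_2} \kappa(E_1|y) g_\rho(y)\, dP^{A_2}_{\rho_\ast}(y) = \int_{E_2} \kappa(E_1|y)\, dP^{A_2}_\rho(y) = P^{A_{12}}_\rho(E_1 \times E_2)$, the middle step using $g_\rho\, dP^{A_2}_{\rho_\ast} = dP^{A_2}_\rho$. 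As rectangles generate the product $\sigma$-algebra, the function $(x,y) \mapsto g_\rho(y)$ is a version of $dP^{A_{12}}_\rho / dP^{A_{12}}_{\rho_\ast}$, which is assertion~4; by uniqueness of Radon-Nikod\'ym derivatives this also re-proves assertion~2 through condition (ii) of Theorem~\ref{theo:povmsuff}. The only genuinely delicate points are the duality-and-uniqueness assembly in assertion~1 and the measurability of the section kernel $\tilde\kappa$; everything else is bookkeeping with the defining relation \eqref{eq:kernelprod}.
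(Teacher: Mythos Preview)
Your argument is correct, but the organization differs from the paper's in two notable ways.

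For assertion~1, the paper does not build $A_{12}$ state by state. Instead it introduces the section kernel $\tilde\kappa(F|y) := \kappa(F|_y \,|\, y)$ at the very outset, verifies (via the Dynkin class theorem) that it is a regular Markov kernel from $(\Omega_2,\mathscr{B}_2)$ to $(\Omega_1\times\Omega_2,\mathscr{B}_1\times\mathscr{B}_2)$, and then \emph{defines} $A_{12}(F) := \int_{\Omega_2}\tilde\kappa(F|y)\,dA_2(y)$ directly as an operator-valued integral. This sidesteps your duality assembly entirely: once $\tilde\kappa$ is known to be a regular kernel, the integral against the POVM $A_2$ is already a POVM, and assertion~3 ($A_{12}\preceq_r A_2$) is built into the definition rather than proved separately. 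Your route trades this for the classical Carath\'eodory/Fubini extension of each scalar measure $\nu_\rho$ plus the $\mathcal{T}(\mathcal{H})^\ast\cong\mathcal{L}(\mathcal{H})$ duality; this is perfectly sound, but you then re-introduce $\tilde\kappa$ anyway to prove assertion~3, so the paper's ordering is a bit more economical.

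For assertions~2 and~4, your logical order is in fact cleaner than the paper's. The paper proves assertion~2 by a direct (and somewhat lengthy) computation of the conditional probabilities $P^{A_{12}}_\rho(E_1\times E_2\,|\,y)$, showing they equal $\kappa(E_1|y)\chi_{E_2}(y)$ independently of $\rho$, and then invokes another Dynkin-class argument. You instead deduce assertion~2 from assertion~3 via Theorem~\ref{theo:povmsuff}\,(v)$\Rightarrow$(i), which is shorter. For assertion~4 the paper simply cites assertion~3 and Theorem~\ref{theo:povmsuff}\,(iii), whereas you give an explicit Radon--Nikod\'ym verification; either is fine, though the paper's citation is quicker.
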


%%%%%%%%%%%%%%%%%Lemma Joint%%%%%%%%%%%%%%%%
%\begin{lemm}
%%\label{lemm:joint}
%Let $(\Omega_1 ,\mathscr{B}_1)$ be a measurable space
%and let $(\Omega_2 ,\mathscr{B}_2 , A_2 )$ be a POVM.
%Suppose that 
%$
%\kappa (\cdot | \cdot) 
%\colon 
%\mathscr{B}_1 \times \Omega_2
%\to
%\mathbb{R}
%$
%is an $A_2$-weak Markov kernel
%and define a POVM $A_{12}$ with the product outcome space 
%$(\Omega_1 \times \Omega_2 , \mathscr{B}_1 \times \mathscr{B}_2)$ by
%\begin{equation}
%	A_{12} (E)
%	:=
%	\int_{\Omega_2}
%	\kappa (E \rvert_{y} | y) \
%	d A_2 (y),
%	\quad
%	(E \in \mathscr{B}_1 \times \mathscr{B}_2 ) ,
%	\notag
%\end{equation}
%where $E \rvert_{y} := \Set{ x \in \Omega_1 |  (x , y) \in E  } .$
%Then the following assertions hold.
%\begin{enumerate}[(i)]
%\item
%$A_{12} \simeq_w A_{2} = (A_{12})_{\pi_2},$
%where 
%\[
%\pi_2 \colon 
%\Omega_1 \times \Omega_2 
%\ni (x, y) 
%\mapsto
%y \in \Omega_2
%\]
%is the canonical projection;
%\item
%$\pi_2$
%is a sufficient statistic for $A_{12};$
%\item
%for each $\rho \in \stsp ,$
%\[
%	\frac{ 
%	d P^{A_{12}}_{\rho}   
%	}{
%	d P^{A_{12}}_{\rho_\ast}
%	}
%	(x,y)
%	=
%	\frac{ 
%	d P^{A_{2}}_{\rho}   
%	}{
%	d P^{A_{2}}_{\rho_\ast}
%	}
%	(y)
%	\quad
%	A_{12}\text{-}\mathrm{a.e.}
%\]
%\end{enumerate}
%\end{lemm}

\begin{proof}
\begin{enumerate}

\item
Let us consider the following mapping
\begin{gather}
	\tilde{\kappa}
	(F| y)
	:=
	\kappa (F \rvert_{y} |y )
	,
	\quad
	(F \in \mathscr{B}_1 \times \mathscr{B}_2 , y \in \Omega_2) ,
	\notag 
	\\
	F \rvert_{y}
	:=
	\Set{
	x \in \Omega_1 
	|
	(x, y) \in F
	}.
	\notag
\end{gather}
Then $\tilde{\kappa} ( \cdot | y )$ is a probability measure for each $y \in \Omega_2.$
To show the measurability of $\tilde{\kappa} (F| \cdot )$ for each $F \in \mathscr{B}_1 \times \mathscr{B}_2$,
let us define a class 
\[
	\mathscr{D}_{\tilde{\kappa}}
	:=
	\Set{
	F \in   \mathscr{B}_1 \times \mathscr{B}_2
	|
	\text{
	$\tilde{\kappa} (F | \cdot )$
	is $\mathscr{B}_2$-measurable
	}
	}.
\]
Then 
$\mathscr{D}_{\tilde{\kappa}}$
is a Dynkin class,
i.e.
$\mathscr{D}_{\tilde{\kappa}}$
contains $\Omega_1 \times \Omega_2$ and 
is closed under proper differences and countable disjoint unions.
Since 
\begin{equation}
\tilde{\kappa} (E_1 \times E_2 | y) = \kappa (E_1 |y) \chi_{E_2 } (y)
\quad
(E_1 \in \mathscr{B}_1, E_2 \in \mathscr{B}_2),
\label{eq:toch1}
\end{equation}
$\mathscr{D}_{\tilde{\kappa}}$
contains the family of cylinder sets
$\mathscr{C} := \Set{ E_1 \times E_2 |  E_1 \in \mathscr{B}_1, E_2 \in \mathscr{B}_2 }$,
which is a multiplicative class.
Then the Dynkin class theorem assures that 
$\mathscr{D}_{\tilde{\kappa}} = \mathscr{B}_1 \times \mathscr{B}_2$,
i.e. $\tilde{\kappa} (\cdot | \cdot)$ is a regular Markov kernel.
Let us define a POVM $A_{12}$ by
\begin{equation}
	A_{12} (F)
	:=
	\int_{\Omega_2}
	\tilde{\kappa}
	(F|y)
	d A_2 (y),
	\quad
	(F \in \mathscr{B}_1 \times \mathscr{B}_2 ) .
	\label{eq:kappatilde}
\end{equation}
Then from Eq.~\eqref{eq:toch1}, 
$A_{12}$ satisfies the desired condition~\eqref{eq:kernelprod}.
The uniqueness of $A_{12}$ immediately follows from the Dynkin class theorem.
\item
Let us denote the conditional probability of $P^{A_{12}}_{\rho}$
for given $\pi_2  = y$ by
$P^{A_{12}}_{\rho} (F|y)$ $(F \in \mathscr{B}_1 \times \mathscr{B}_2)$.
Then for each $E_1 \in \mathscr{B}_1$ and $E_2 , E_2^\prime \in \mathscr{B}_2$,
we have
\begin{align*}
	P_{\rho}^{A_{12}}
	(E_1 \times (E_2 \cap E_2^\prime))
	&=
	\int_{E_2^\prime}
	\kappa (E_1|y) 
	\chi_{E_2} (y) 
	d P_{\rho}^{A_{2}} (y)
	\\
	&=
	\int_{E_2^\prime}
	P^{ A_{12} }_{\rho}
	(E_1 \times E_2|y)
	d P_{\rho}^{A_{2}} (y).
\end{align*}
Thus we have
\begin{gather*}
	P^{ A_{12} }_{\rho}
	(E_1 \times E_2|y)
	=
	\kappa (E_1|y) 
	\chi_{E_2} (y) 
	\quad
	P^{ A_{2} }_{\rho}\text{-a.e.}
\end{gather*}
and
\begin{gather*}
	P^{ A_{12} }_{\rho_\ast}
	(E_1 \times E_2|y)
	=
	\kappa (E_1|y) 
	\chi_{E_2} (y) 
	\quad
	A_2\text{-a.e.}
\end{gather*}
Therefore we obtain
\[
	P^{ A_{12} }_{\rho}
	(E_1 \times E_2|y)
	=
	P^{ A_{12} }_{\rho_\ast}
	(E_1 \times E_2|y)
	\quad
	P^{ A_{2} }_{\rho}\text{-a.e.}
\] 
To show that
$
	P^{ A_{12} }_{\rho}
	(F|y)
$
can be taken independent of $\rho$,
we define a class
\[
	\mathscr{D}_{\rho}
	:=
	\Set{
	F \in \mathscr{B}_1 \times \mathscr{B}_2
	|
	P^{ A_{12} }_{\rho}
	(F|y)
	=
	P^{ A_{12} }_{\rho_\ast}
	(F|y)
	\quad
	P^{ A_{2} }_{\rho}\text{-a.e.}
	} .
\]
Then
$\mathscr{D}_{\rho}$
is a Dynkin class that contains the family of cylinder sets $\mathscr{C}.$
Therefore the Dynkin class theorem assures that
$
	P^{ A_{12} }_{\rho}
	(F|y)
	=
	P^{ A_{12} }_{\rho_\ast}
	(F|y)
$
$
	P^{ A_{2} }_{\rho}\text{-a.e.}
$
for each $F \in \mathscr{B}_1 \times \mathscr{B}_2$.
This implies that $\pi_2$ is a sufficient statistic.

The assertion 3 is evident from 
Eq.~\eqref{eq:kappatilde} and $A_2 = (A_{12})_{\pi_2} \preceq_r A_{12} .$
%Eqs.~\eqref{eq:kernelprod} and \eqref{eq:kappatilde}

The assertion 4 immediately follows from 
the assertion 3
and Theorem~\ref{theo:povmsuff}.\qedhere
%Since $\kappa ( \pi_2^{-1} (F) \rvert_{y} |  y   )  = \chi_F (y)$ $A_2$-a.e. 
%for each $F \in \mathscr{B}_2$,
%we have $A_{2} = (A_{12})_{\pi_2}  \preceq_w A_{12} .$
%From the definition of $A_{12},$ $A_{12} \preceq_w A_2$ also holds,
%and thus we obtain (i).
%(ii) and (iii) immediately follow from (i) and Theorem~\ref{theo:povmsuff}.
\end{enumerate}
\end{proof}

%%%%%%%%%%%%%%%section Minimal sufficient POVM%%%%%%%%%%%%%%%%%%%%
\section{Minimal sufficient POVM}
\label{sec:minimal}
In this section, we define a minimal sufficient POVM
and show the existence and uniqueness of a minimal sufficient POVM
equivalent to a given POVM.
%%%%%%%%%%%%%%definition: minimal sufficient POVM%%%%%%%%%%%%
\begin{defi}
\label{defi:minimalpovm}
Let $(\Omega_A , \mathscr{B}_A , A)$ be a POVM.
\begin{enumerate}
\item
$(\Omega_A , \mathscr{B}_A , A)$ is said to be $\simeq_r$-minimal sufficient
if for any POVM 
$(\Omega_B , \mathscr{B}_B , B)$ regularly fuzzy equivalent to $A$
there exists a measurable map 
$f \in \mathbb{M} ( (\Omega_B , \mathscr{B}_B) \to (\Omega_A , \mathscr{B}_A)  )$
such that $B_f = A .$
\item
$(\Omega_A , \mathscr{B}_A , A)$ is said to be $\simeq_w$-minimal sufficient
if for any POVM 
$(\Omega_B , \mathscr{B}_B , B)$ weakly fuzzy equivalent to $A$
there exists a measurable map 
$f \in \mathbb{M} ( (\Omega_B , \mathscr{B}_B) \to (\Omega_A , \mathscr{B}_A)  )$
such that $B_f = A .$
\end{enumerate}
\end{defi}
Since $\simeq_w$ is a relation less restrictive than $\simeq_r ,$
any $\simeq_w$-minimal sufficient POVM is a $\simeq_r$-minimal sufficient POVM
by definition.

We remark that the measurable map $f$ in Definition~\ref{defi:minimalpovm}
is a sufficient statistic for the statistical model $\mathcal{P}^B $
due to
Theorem~\ref{theo:povmsuff} and $B_f = A \simeq_w B$.
%In Sec.~\ref{subsec:msstat} we will see this $f$ is actually a minimal sufficient statistic.

%To show the minimality of $f$, 
%let $S \in \mathbb{M} ( (\Omega_B , \mathscr{B}_B ) \to (\Omega_S , \mathscr{B}_S))$
%be a sufficient statistic for $B.$
%Then from Theorem~\ref{theo:povmsuff}, 
%$B_S \simeq B \simeq A$.
%Thus from the minimal sufficiency of $A$,
%there exists a mapping
%$
%g \in \mathbb{M} ((\Omega_S, \mathscr{B}_S) \to (\Omega_A , \mathscr{B}_A) )
%$
%such that $B_{g\circ S} = (B_{S})_g = A = B_f .$
%This implies that $f (y) = g (S(y))$ $B$-a.e. and
%$f$ is minimal sufficient for $B$.

The minimal sufficiency of a POVM can be interpreted as a
generalization of the minimal sufficiency of a statistic
in the sense that we consider a more general class of classical post-processings
which includes taking statistics.
Intuitively, a minimal sufficient POVM $A$ is the least redundant POVM
among POVMs fuzzy equivalent to $A$.

%The minimal sufficient POVM is the generalization of the minimal sufficient statistic 
%in the following sense.
%Let $(\Omega_A , \mathscr{B}_A , A)$ be a minimal sufficient POVM
%and let 
%$T \in \mathbb{M} ( (\Omega_A , \mathscr{B}_A) \to (\Omega_T , \mathscr{B}_T) )$
%be a sufficient statistic for $A$.
%Then from Theorem~\ref{theo:povmsuff}, $A \simeq_w A_T$.
%From the definition of the minimal sufficient POVM there exists a mapping 
%$f \in \mathbb{M} ((\Omega_T , \mathscr{B}_T) \to (\Omega_A , \mathscr{B}_A))$
%such that 

\subsection{Lehmann-Scheff\'{e}-Bahadur statistic}%{Existence of a weakly equivalent minimal sufficient POVM}
Let $(\Omega_A , \mathscr{B}_A , A)$ be a POVM.
Corresponding to Theorem~\ref{theo:lsb} we define a statistic
$T\in \mathbb{M} ( (\Omega_A , \mathscr{B}_A) \to (\mathbb{R}^\infty , \mathscr{B} (\mathbb{R}^\infty)) )$
by
\begin{equation}
	T (x)
	:=
	\left(
	\frac{
	d P^A_{\rho_i} 
	}{
	d P^A_{\rho_\ast}
	}
	(x)
	\right)_{i \geq 1}
	\in \mathbb{R}^\infty .
	\label{eq:povmlsb}
\end{equation}
By taking authors\rq{} names of 
Refs.~\onlinecite{lehmann1950,bahadur1954},
%the existing works 
%on the minimal sufficient statistic~\cite{lehmann1950,bahadur1954},
we call the statistic~\eqref{eq:povmlsb} 
a \textit{Lehmann-Scheff\'{e}-Bahadur (LSB) statistic}
for $A$.

The following theorem is the first main result of this paper.
%%%%%%%%%%%%%%Theorem: existence of a minimal sufficient POVM%%%%%%%%%%%%

\begin{theo}
\label{theo:minimal_exists}
Let $(\Omega_A , \mathscr{B}_A , A)$ be an arbitrary POVM
and let 
$T$ be
the LSB statistic given by~\eqref{eq:povmlsb}.
Then the following assertions hold.
\begin{enumerate}
\item
$T$ is a minimal sufficient statistic for $\mathcal{P}^A$.
\item
$A_T$ is a $\simeq_w$-minimal sufficient POVM,
and therefore a $\simeq_r$-minimal sufficient POVM.
\end{enumerate}
\end{theo}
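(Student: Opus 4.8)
The plan is to read off assertion~1 from the Lehmann--Scheff\'e--Bahadur theorem (Theorem~\ref{theo:lsb}) and then bootstrap assertion~2 on top of it. For assertion~1 I must verify the single hypothesis of Theorem~\ref{theo:lsb} for the model $\mathcal{P}^A$, namely that the countable family $\{P^A_{\rho_i}\}_{i\ge 1}$ is dense in $\mathcal{P}^A$ for the total-variation metric~\eqref{eq:ddef}. First I would prove the Lipschitz bound $d(P^A_\rho,P^A_\sigma)=\sup_{E}|\tr[(\rho-\sigma)A(E)]|\le\|\rho-\sigma\|_1$, which uses only $O\le A(E)\le I$; since $\{\rho_i\}$ is trace-norm dense in $\stsp$, this transports density to $\{P^A_{\rho_i}\}$ in $\mathcal{P}^A$. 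Because the pivotal measure $\lambda=\sum_i c_iP_{\theta_i}$ in Theorem~\ref{theo:lsb} coincides here with $P^A_{\rho_\ast}$ (Proposition~\ref{prop:aconti}), the statistic~\eqref{eq:povmlsb} is literally the LSB statistic of that theorem, and assertion~1 follows.

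For assertion~2, sufficiency of $T$ combined with Theorem~\ref{theo:povmsuff} already gives $A\simeq_w A_T$. To check the minimality condition of Definition~\ref{defi:minimalpovm} I take an arbitrary POVM $B$ with $B\simeq_w A_T$, so $B\simeq_w A$ by transitivity, and I produce the required statistic as the LSB statistic $f(y):=\bigl(dP^B_{\rho_i}/dP^B_{\rho_\ast}(y)\bigr)_{i\ge1}\colon\Omega_B\to\mathbb{R}^\infty$ of $B$. Applying assertion~1 to $B$, $f$ is minimal sufficient, in particular sufficient, for $\mathcal{P}^B$, so $B_f\simeq_w B\simeq_w A_T$. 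The target is the sharp equality $B_f=A_T$. Both are POVMs on $(\mathbb{R}^\infty,\mathscr{B}(\mathbb{R}^\infty))$, and, feeding $T$ and $f$ into the sufficiency characterization Theorem~\ref{theo:povmsuff}(iii), each carries the canonical coordinate structure $dP^{A_T}_{\rho_i}/dP^{A_T}_{\rho_\ast}(t)=t_i$ and $dP^{B_f}_{\rho_i}/dP^{B_f}_{\rho_\ast}(s)=s_i$ almost everywhere; note also $\sum_i c_it_i=\sum_i c_is_i=1$ a.e., since $\rho_\ast=\sum_i c_i\rho_i$.

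The hard part will be this final identification $B_f=A_T$. Since $A_T\preceq_w B_f$, there is a $B_f$-weak Markov kernel $\kappa$ with $A_T(F)=\int\kappa(F|s)\,dB_f(s)$, hence $P^{A_T}_\rho(F)=\int\kappa(F|s)\,dP^{B_f}_\rho(s)$ for every $\rho$; thus $\kappa$ transports the whole family $\{P^{B_f}_\rho\}$ to $\{P^{A_T}_\rho\}$. By Lemma~\ref{lemm:Df} the equivalence $A_T\simeq_w B_f$ turns into the equalities $D_f(P^{A_T}_{\rho_i},P^{A_T}_{\rho_j})=D_f(P^{B_f}_{\rho_i},P^{B_f}_{\rho_j})$, i.e. the data-processing inequality for $\kappa$ is saturated, and I would exploit the strict convexity of the fixed $f$: saturation forces $\kappa$ to preserve likelihood ratios, so in the coupling $s\sim P^{B_f}_{\rho_\ast}$, $t\sim\kappa(\cdot|s)$ one has $t_i/t_j=s_i/s_j$ for all $i,j$; together with $\sum_i c_it_i=\sum_i c_is_i=1$ this gives $t=s$ almost surely, i.e. $\kappa(\cdot|s)=\delta_s$. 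Hence $P^{A_T}_\rho=P^{B_f}_\rho$ for every $\rho\in\stsp$, and since states separate operators (taking $\rho=\ket{\psi}\bra{\psi}$ yields $\braket{\psi|(A_T(F)-B_f(F))|\psi}=0$ for all $\psi$), we conclude $A_T=B_f$. This exhibits $f$ as the map required in Definition~\ref{defi:minimalpovm}, so $A_T$ is $\simeq_w$-minimal sufficient, and the $\simeq_r$ case is immediate because $\simeq_r$ implies $\simeq_w$ and the same $f$ works. I expect the equality-in-data-processing step to be the genuine obstacle; a convenient alternative entry point is that $\mathbb{R}^\infty$ is standard Borel, so Proposition~\ref{prop:wimplyr} upgrades $A_T\preceq_w B_f$ to $A_T\preceq_r B_f$ and supplies a regular kernel on which to run the same computation.
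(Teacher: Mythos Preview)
Your Part~1 and your choice of $f$ in Part~2 coincide with the paper's. For the crux---showing $B_f=A_T$---the paper does not argue via data-processing saturation between $B_f$ and $A_T$; instead it works directly with $B$ and uses Lemma~\ref{lemm:joint}. From the regular kernel $\kappa\colon\mathscr{B}(\mathbb{R}^\infty)\times\Omega_B\to[0,1]$ (available because $\mathbb{R}^\infty$ is standard Borel) it builds the joint POVM $C$ on $\mathbb{R}^\infty\times\Omega_B$, and then observes that \emph{both} canonical projections are sufficient for $C$: $\pi_B$ by Lemma~\ref{lemm:joint}, and $\pi_{A_T}$ because $C_{\pi_{A_T}}=A_T\simeq_w B\simeq_r C$ together with Theorem~\ref{theo:povmsuff}. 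Theorem~\ref{theo:povmsuff}(iii) then gives $dP^C_{\rho_i}/dP^C_{\rho_\ast}(t,y)=t_i$ and $dP^C_{\rho_i}/dP^C_{\rho_\ast}(t,y)=dP^B_{\rho_i}/dP^B_{\rho_\ast}(y)=f_i(y)$, both $C$-a.e., whence $t=f(y)$ $C$-a.e.\ and $B_f=A_T$ by a two-line integration. Your phrase ``saturation forces $\kappa$ to preserve likelihood ratios in the coupling'' is, once unpacked, exactly this two-sided-sufficiency statement on a joint space; so the two routes are the same argument, with Lemma~\ref{lemm:joint} supplying the rigorous scaffolding for the step you correctly flag as the obstacle.

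One genuine technical correction to your version: arguing via pairs $(\rho_i,\rho_j)$ to obtain $t_i/t_j=s_i/s_j$ gives the equality only a.e.\ with respect to the coupling built from $P^{B_f}_{\rho_j}$, which need not dominate the pivotal coupling built from $P^{B_f}_{\rho_\ast}$, so the countable intersection over $i,j$ is not controlled. Use the pair $(\rho_i,\rho_\ast)$ instead---Lemma~\ref{lemm:Df} still supplies the $D_f$-equality since $\rho_\ast\in\stsp$---and the conclusion is directly $s_i=t_i$ a.e.\ for the pivotal coupling, making both the ratio manipulation and the normalization $\sum_i c_it_i=1$ unnecessary. With that fix your sketch is correct and essentially reproduces the paper's proof, just with the extra (harmless) detour of introducing $B_f$ before building the coupling.
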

\begin{proof}
\begin{enumerate}
\item
From Theorem~\ref{theo:lsb}, it is sufficient to show 
that $\{ P^A_{\rho_i}  \}_{i \geq 1}$ is dense in $\mathcal{P}^A$
with respect to the metric $d(\cdot, \cdot)$ given by~\eqref{eq:ddef}.
For each $\rho , \sigma \in \stsp$ and each $E \in \mathscr{B}_A$ we have
\begin{align*}
	 | P^A_\rho (E) - P^A_\sigma (E) |
	&=
	| \tr [ (\rho - \sigma ) A(E) ] |
	\\
	&\leq
	|| \rho - \sigma ||_{\tr} ||A(E)||
	\leq
	|| \rho - \sigma ||_{\tr} ,
\end{align*}
where 
$||\cdot ||$ and
$||\cdot ||_{\tr}$ 
are the operator and trace norms, respectively.
Thus we obtain
\begin{equation}
	d (P^A_\rho , P^A_\sigma) \leq || \rho - \sigma ||_{\tr} 
	.
	\label{eq:dleq}
\end{equation}
%Therefore the separability of $\{  \rho_i \}_{i \geq 1}$ with respect to the trace norm assures 
%the separability of $\{ P^A_{\rho_i}  \}_{i \geq 1}$ with respect to $d(\cdot , \cdot )$,
Since $\{  \rho_i \}_{i \geq 1}$ is dense in $\stsp$ with respect to the trace norm,
the inequality~\eqref{eq:dleq} implies that
$\{ P^A_{\rho_i}  \}_{i \geq 1}$ is dense with respect to $d(\cdot , \cdot)$
in $\mathcal{P}^A$,
and we have shown the minimal sufficiency of $T$.

\item
From Theorem~\ref{theo:povmsuff} and the sufficiency of $T$, we have 
\[
	\frac{  
	d P^A_{\rho_i} 
	}{
	d P^A_{\rho_\ast}
	}
	(x)
	=
	\frac{  
	d P^{A_T}_{\rho_i} 
	}{
	d P^{A_T}_{\rho_\ast}
	}
	(T(x)) 
	\quad
	A\text{-a.e.}
\]
for each $i \geq 1.$ 
From the definition of the LSB statistic~\eqref{eq:povmlsb},
this implies 
\[
	T(x)
	=
	\left(
	\frac{  
	d P^{A_T}_{\rho_i} 
	}{
	d P^{A_T}_{\rho_\ast}
	}
	(T(x))
	\right)_{i \geq 1} 
	\quad
	A\text{-a.e.,}
\]
or
\begin{equation}
	t
	=
	\left(
	\frac{  
	d P^{A_T}_{\rho_i} 
	}{
	d P^{A_T}_{\rho_\ast}
	}
	(t)
	\right)_{i \geq 1} 
	\quad
	A_T\text{-a.e.}
	\label{eq:tself}
\end{equation}

To show the $\simeq_w$-minimal sufficiency of $A_T$,
let $(\Omega_B , \mathscr{B}_B , B)$ be an arbitrary POVM 
weakly fuzzy equivalent to $A_T.$
Since the outcome space $(\mathbb{R}^\infty, \mathscr{B}(\mathbb{R}^\infty))$
for $A_T$ is a standard Borel space,
there exists a regular Markov kernel 
$
\kappa (\cdot | \cdot) 
\colon 
\mathscr{B} (\mathbb{R}^\infty) \times \Omega_B
\to
[0,1]
$
such that
\begin{equation}
	A_T (F)
	=
	\int_{\Omega_B}
	\kappa(F|y)
	d B (y)
	\label{eq:Bkernel} 
\end{equation}
for each $F \in \mathscr{B} (\mathbb{R}^\infty) .$
We define a POVM 
$
(
 \mathbb{R}^\infty \times \Omega_B  ,
\mathscr{B} ( \mathbb{R}^\infty) \times \mathscr{B}_B  ,
C
)$
by
\[
	C(E)
	:=
	\int_{\Omega_B}
	\kappa(E \rvert_{y} |y)
	d B (y)
	\quad 
	(E \in \mathscr{B} ( \mathbb{R}^\infty) \times \mathscr{B}_B) .
\]
Then from Lemma~\ref{lemm:joint}, we have $C \simeq_r B = C_{\pi_{B} }$
and
\begin{equation}
	\frac{
	d P^C_{\rho_i}
	}{
	d P^C_{\rho_\ast}
	}
	(t,y)
	=
		\frac{  
	d P^{B }_{\rho_i} 
	}{
	d P^{ B }_{\rho_\ast}
	}
	(y)
	\quad
	C\text{-a.e.,}
	\label{eq:Cdiff}
\end{equation}
for each $i \geq 1 ,$
where $\pi_{ B } \colon   \mathbb{R}^\infty \times   \Omega_B  \to   \Omega_B$
is the canonical projection.
On the other hand, from Eq.~\eqref{eq:Bkernel} we have 
$C_{\pi_{A_T}} = A_T \simeq_w B \simeq_r C $,
where 
$\pi_{A_T} \colon \mathbb{R}^\infty  \times \Omega_B   \to \mathbb{R}^\infty$
is the canonical projection,
and Theorem~\ref{theo:povmsuff} assures that 
\begin{equation}
	\frac{
	d P^C_{\rho_i}
	}{
	d P^C_{\rho_\ast}
	}
	(t,y)
	=
	\frac{  
	d P^{A_T}_{\rho_i} 
	}{
	d P^{A_T}_{\rho_\ast}
	}
	(t)
	\quad
	C\text{-a.e.}
	\label{eq:Cdiff2}
\end{equation}
for each $i \geq 1.$
From Eqs.~\eqref{eq:tself},~\eqref{eq:Cdiff} and \eqref{eq:Cdiff2},
we obtain
\begin{equation}
	t
	=
	\left(
		\frac{  
	d P^{B}_{\rho_i} 
	}{
	d P^{B}_{\rho_\ast}
	}
	(y)
	\right)_{i \geq 1}
	= :
	f (y)
	\quad 
	C\text{-a.e.,}
	\label{eq:teqf}
\end{equation}
where 
$f \in \mathbb{M} ((\Omega_B , \mathscr{B}_B) 
\to 
(\mathbb{R}^\infty ,
\mathscr{B}  ( \mathbb{R}^\infty  )
)
).$
Then for each $E \in  \mathscr{B}  ( \mathbb{R}^\infty  ) $, we have
\begin{align}
	B_f (E)
	&=
	B(f^{-1} (E) )
	\notag 
	\\
	&=
	C (  \mathbb{R}^\infty \times  f^{-1} (E)  )
	\notag
	\\
	&=
	\int_{   \mathbb{R}^\infty  \times  \Omega_B }
	\chi_E ( f(y) )
	d C (t ,y)
	\notag
	\\
	&=
	\int_{ \mathbb{R}^\infty  \times  \Omega_B}
	\chi_E ( t )
	d C (t,y)
	\label{eq:tubo}
	\\
	&= 
	A_T(E),
	\notag
\end{align}
where we have used Eq.~\eqref{eq:teqf} in deriving the equality~\eqref{eq:tubo}.
Thus we obtain $B_f = A_T,$
and we have shown the $\simeq_w$-minimal sufficiency of $A_T$.
\qedhere
\end{enumerate}
\end{proof}

\subsection{Uniqueness up to almost isomorphism}
In order to formulate the uniqueness of the minimal sufficient POVM,
we introduce a concept of almost isomorphism as follows.

%%%%%%%%%%%%definition: isomorphism%%%%%%%%%%%%%
\begin{defi}
\label{defi:isomorphism}
Let $(\Omega_i , \mathscr{B}_i , A_i)$
$(i = 1,2)$ be POVMs.
\begin{enumerate}[(i)]
\item
A $\mathscr{B}_1 / \mathscr{B}_2$-bimeasurable bijection $f \colon \Omega_1 \to \Omega_2$
is called a strict isomorphism if
$(A_1)_f = A_2$.
If there exists such a strict isomorphism,
$(\Omega_1 , \mathscr{B}_1 , A_1)$ and $(\Omega_2 , \mathscr{B}_2 , A_2)$
are said to be strictly isomorphic, denoted by
$(\Omega_1 , \mathscr{B}_1 , A_1) \approx (\Omega_2 , \mathscr{B}_2 , A_2).$
\item
$(\Omega_1 , \mathscr{B}_1 , A_1)$ and $(\Omega_2 , \mathscr{B}_2 , A_2)$
are said to be almost isomorphic, written 
$(\Omega_1 , \mathscr{B}_1 , A_1)  \sim  (\Omega_2 , \mathscr{B}_2 , A_2)$,
if there exist measurable subsets $\tilde{\Omega}_i \in \mathscr{B}_i $
$(i=1,2)$
such that $A_i (\tilde{\Omega}_i) = I$
and the restrictions 
$(
\tilde{\Omega}_i ,
\tilde{\Omega}_i \cap \mathscr{B}_i ,
A_i \rvert_{\tilde{\Omega}_i}
)$
$(i = 1,2)$
are strictly isomorphic.  
Here 
$
\tilde{\Omega}_i \cap \mathscr{B}_i 
:= 
\Set{ \tilde{\Omega}_i \cap E | E \in \mathscr{B}_i  }
$
and 
$A_i \rvert_{\tilde{\Omega}_i}$
is the restriction of $A_i$ to
$\tilde{\Omega}_i \cap \mathscr{B}_i .$
We call a measurable subset $\tilde{\Omega}_i$ with $A_i (\tilde{\Omega}_i) = I$ a
full measure set.
\end{enumerate}
\end{defi}

The above definition is a straightforward generalization of the corresponding concepts
known in classical measures~\cite{ito1984introduction,bogachev2007}.
The relations $\approx$ and $\sim$ are equivalence relations,
which can be proved in a similar manner as for the classical measure
(e.g. Sec.~2.4 of Ref.~\onlinecite{ito1984introduction}).
Intuitively, these concepts correspond to the relabeling of the measurement outcomes.

The following proposition gives the relationship between
these isomorphisms and the fuzzy equivalence relation. 
\begin{prop}
\label{prop:relationorder}
Let $(\Omega_A , \mathscr{B}_A , A)$ and 
$(\Omega_B , \mathscr{B}_B , B)$
be POVMs.
Then the following implications hold.
\begin{align*}
(\Omega_A , \mathscr{B}_A , A)
\approx
(\Omega_B , \mathscr{B}_B , B)
&\Rightarrow
(\Omega_A , \mathscr{B}_A , A)
\sim
(\Omega_B , \mathscr{B}_B , B)
\\
&\Rightarrow
A
\simeq_r
B.
\end{align*}
\end{prop}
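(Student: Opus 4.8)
The plan is to dispatch the two implications separately, the first being essentially trivial and the second requiring the construction of two point-mass Markov kernels. For $\approx \Rightarrow \sim$, I would simply take the full measure sets in Definition~\ref{defi:isomorphism}(ii) to be the entire outcome spaces. If $f \colon \Omega_A \to \Omega_B$ is a strict isomorphism with $A_f = B$, then choosing $\tilde{\Omega}_A = \Omega_A$ and $\tilde{\Omega}_B = \Omega_B$ — which are full measure sets since $A(\Omega_A) = I$ and $B(\Omega_B) = I$ by property (ii) of a POVM — makes the restrictions coincide with $A$ and $B$ themselves, so that $f$ directly serves as the required strict isomorphism of the restrictions. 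Hence $A \sim B$.

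The substantive part is $\sim \Rightarrow \simeq_r$. Suppose $A \sim B$ is witnessed by full measure sets $\tilde{\Omega}_A, \tilde{\Omega}_B$ and a $(\tilde{\Omega}_A \cap \mathscr{B}_A)/(\tilde{\Omega}_B \cap \mathscr{B}_B)$-bimeasurable bijection $g \colon \tilde{\Omega}_A \to \tilde{\Omega}_B$ with $A(g^{-1}(F)) = B(F)$ for every $F \in \tilde{\Omega}_B \cap \mathscr{B}_B$. To obtain $B \preceq_r A$ I would define a regular Markov kernel $\kappa(\cdot|\cdot) \colon \mathscr{B}_B \times \Omega_A \to [0,1]$ as the point mass at $g(x)$ on the full measure set, and as an arbitrary fixed point mass off it: fixing some $y_0 \in \tilde{\Omega}_B$ (nonempty because $A(\tilde{\Omega}_A) = I \neq O$ forces $\tilde{\Omega}_A \neq \emptyset$, whence $\tilde{\Omega}_B = g(\tilde{\Omega}_A) \neq \emptyset$), I set $\kappa(E|x) := \chi_E(g(x))$ for $x \in \tilde{\Omega}_A$ and $\kappa(E|x) := \chi_E(y_0)$ otherwise. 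Each $\kappa(\cdot|x)$ is then a probability measure, and $\kappa(E|\cdot)$ is $\mathscr{B}_A$-measurable because on $\tilde{\Omega}_A$ it equals the indicator of $g^{-1}(E \cap \tilde{\Omega}_B)$, which is measurable by measurability of $g$. The verification then reduces to the chain $\int_{\Omega_A} \kappa(E|x)\, dA(x) = \int_{\tilde{\Omega}_A} \chi_{g^{-1}(E \cap \tilde{\Omega}_B)}(x)\, dA(x) = A(g^{-1}(E \cap \tilde{\Omega}_B)) = B(E \cap \tilde{\Omega}_B) = B(E)$, where the contribution off $\tilde{\Omega}_A$ drops out since $A$ vanishes there, and the last equality uses $B(\Omega_B \setminus \tilde{\Omega}_B) = O$. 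This gives $B \preceq_r A$, and running the identical argument with $g^{-1} \colon \tilde{\Omega}_B \to \tilde{\Omega}_A$ in place of $g$ (using $B(g(E \cap \tilde{\Omega}_A)) = A(E)$) gives $A \preceq_r B$, hence $A \simeq_r B$.

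The only real obstacle here is bookkeeping rather than anything conceptual: one must define the kernel genuinely everywhere on $\Omega_A$, not merely almost everywhere, so that it is an honest regular Markov kernel in the sense of Definition~\ref{defi:relations}, which is exactly why the fixed reference point $y_0$ and the nonemptiness remark are needed. Beyond that, the computation only relies on the two null complements $\Omega_A \setminus \tilde{\Omega}_A$ and $\Omega_B \setminus \tilde{\Omega}_B$ carrying zero $A$- and $B$-measure, so that they contribute nothing to the defining integral and the intertwining relation $A \circ g^{-1} = B$ can be applied on the full measure sets.
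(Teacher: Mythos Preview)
Your proof is correct and uses essentially the same idea as the paper: extend the partial bijection to a total measurable map by sending the null complement to a fixed reference point, so that the complement contributes nothing to the integral. The only difference is organizational --- the paper factors the argument as $A \simeq_r A\rvert_{\tilde{\Omega}_A} \simeq_r B\rvert_{\tilde{\Omega}_B} \simeq_r B$ via three deterministic maps and invokes transitivity of $\simeq_r$, whereas you compose these into a single Dirac kernel in one step.
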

\begin{proof}
The first implication is evident from the definitions of the strict and almost isomorphisms.
Let us assume
$(\Omega_A , \mathscr{B}_A , A)
\sim
(\Omega_B , \mathscr{B}_B , B)
$
and show $A \simeq_r B.$
Let $\tilde{\Omega}_A \in \mathscr{B}_A$ and 
$\tilde{\Omega}_B \in \mathscr{B}_B$
be full measure sets such that
$
(\tilde{\Omega}_A , \tilde{\Omega}_A \cap \mathscr{B}_A , A \rvert_{\tilde{\Omega}_A}  )
\approx
(\tilde{\Omega}_B , \tilde{\Omega}_B \cap \mathscr{B}_B , B \rvert_{\tilde{\Omega}_B}  )
.
$
We first show $A \simeq_r A \rvert_{\tilde{\Omega}_A}.$
Since the identity map
\[
	\iota
	\colon
	\tilde{\Omega}_A
	\ni 
	x
	\mapsto
	x
	\in 
	\Omega_A
\]
is $\tilde{\Omega}_A \cap \mathscr{B}_A /\mathscr{B}_A $-measurable,
we have $A = ( A \rvert_{ \tilde{\Omega}_A }  )_{\iota} \preceq_r A \rvert_{ \tilde{\Omega}_A }. $
We fix a point $x_0 \in \tilde{\Omega}_A$ and define a mapping
$\tilde{ \iota } \colon \Omega_A \to \tilde{\Omega}_A$
by
\begin{equation}
	\tilde{\iota}
	(x)
	:=
	\begin{cases}
	x & (x \in \tilde{\Omega}_A) ;
	\\
	x_0 & (x \in \Omega_A \setminus \tilde{\Omega}_A) .
	\end{cases}
	\label{eq:tiota}
\end{equation}
Then for each $E \in \tilde{\Omega}_A \cap \mathscr{B}_A$ we have
\[
	\tilde{\iota}^{-1}
	(E)
	=
	\begin{cases}
	E \cup (\Omega_A \setminus \tilde{\Omega}_A)  & (x_0 \in E) ;
	\\
	E & (x_0 \notin E) .
	\end{cases}
\]
Hence $\tilde{\iota}$ is $\mathscr{B}_A / \tilde{\Omega}_A \cap \mathscr{B}_A$-measurable
and $A \rvert_{ \tilde{\Omega}_A } = A_{\tilde{\iota}} \preceq_r A . $
Thus we have shown $A \simeq_r A \rvert_{ \tilde{\Omega}_A } .$

Similarly we can prove $B \simeq_r B \rvert_{\tilde{\Omega}_B }$.
Since the equivalence $A \rvert_{ \tilde{\Omega}_A } \simeq_r  B \rvert_{\tilde{\Omega}_B }$
immediately follows from the definition of the strict isomorphism,
we obtain 
$
A 
\simeq_r
A \rvert_{ \tilde{\Omega}_A } 
\simeq_r
B \rvert_{\tilde{\Omega}_B }
\simeq_r
B.
$
\end{proof}

We next show that 
the minimal sufficiency condition for a POVM is invariant 
under almost isomorphism.

\begin{prop}
\label{prop:isomini}
Let $(\Omega_A , \mathscr{B}_A , A)$ be a $\simeq_w$-minimal sufficient 
(resp. $\simeq_r$-minimal sufficient)
POVM
and let 
$(\Omega_B , \mathscr{B}_B , B)$
be a POVM almost isomorphic to
$(\Omega_A , \mathscr{B}_A , A).$
Then $B$ is also $\simeq_w$-minimal sufficient
(resp. $\simeq_r$-minimal sufficient).
\end{prop}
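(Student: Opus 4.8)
The plan is to reduce the assertion to the minimal sufficiency of $A$ by transporting an arbitrary competitor POVM through the almost isomorphism. I treat the $\simeq_w$ case explicitly; the $\simeq_r$ case is identical after replacing $\simeq_w$ by $\simeq_r$ throughout, since only the transitivity of the relation and the minimal sufficiency hypothesis on $A$ are used.

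First I would fix an arbitrary POVM $(\Omega_C , \mathscr{B}_C , C)$ with $C \simeq_w B$ and aim to construct a measurable map $f \colon \Omega_C \to \Omega_B$ with $C_f = B$. Since $(\Omega_B , \mathscr{B}_B , B) \sim (\Omega_A , \mathscr{B}_A , A)$, Proposition~\ref{prop:relationorder} gives $A \simeq_r B$ and hence $A \simeq_w B$; combined with $C \simeq_w B$ and the transitivity of $\simeq_w$ this yields $C \simeq_w A$. The $\simeq_w$-minimal sufficiency of $A$ then supplies a measurable map $h \in \mathbb{M}((\Omega_C , \mathscr{B}_C) \to (\Omega_A , \mathscr{B}_A))$ with $C_h = A$. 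The remaining task is to convert $h$ into a map landing in $\Omega_B$.

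The key construction is to extend the strict isomorphism provided by $\sim$. Let $\tilde{\Omega}_A \in \mathscr{B}_A$ and $\tilde{\Omega}_B \in \mathscr{B}_B$ be full measure sets and $g \colon \tilde{\Omega}_A \to \tilde{\Omega}_B$ a bimeasurable bijection with $(A\rvert_{\tilde{\Omega}_A})_g = B\rvert_{\tilde{\Omega}_B}$. Mirroring the $\tilde{\iota}$ trick in the proof of Proposition~\ref{prop:relationorder}, I fix a point $y_0 \in \tilde{\Omega}_B$ and define $\bar{g} \colon \Omega_A \to \Omega_B$ by $\bar{g}(x) = g(x)$ for $x \in \tilde{\Omega}_A$ and $\bar{g}(x) = y_0$ for $x \in \Omega_A \setminus \tilde{\Omega}_A$. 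Writing $\bar{g}^{-1}(E) = g^{-1}(E \cap \tilde{\Omega}_B)$ together with a subset of $\Omega_A \setminus \tilde{\Omega}_A$ (namely that set if $y_0 \in E$, and $\emptyset$ otherwise) for $E \in \mathscr{B}_B$ shows that $\bar{g}$ is $\mathscr{B}_A / \mathscr{B}_B$-measurable, using the bimeasurability of $g$ and $\tilde{\Omega}_A \in \mathscr{B}_A$. I then set $f := \bar{g} \circ h$.

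It remains to verify $C_f = B$. From $C_h = A$ we have $C_f = (C_h)_{\bar{g}} = A_{\bar{g}}$, so it suffices to show $A_{\bar{g}} = B$. For $E \in \mathscr{B}_B$, the part of $\bar{g}^{-1}(E)$ lying in the $A$-null set $\Omega_A \setminus \tilde{\Omega}_A$ contributes $O$, leaving $A(\bar{g}^{-1}(E)) = A(g^{-1}(E \cap \tilde{\Omega}_B)) = (A\rvert_{\tilde{\Omega}_A})(g^{-1}(E \cap \tilde{\Omega}_B)) = B(E \cap \tilde{\Omega}_B) = B(E)$, where the third equality is the strict isomorphism relation and the last uses that $\Omega_B \setminus \tilde{\Omega}_B$ is $B$-null. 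Hence $A_{\bar{g}} = B$ and $C_f = B$, which establishes the $\simeq_w$-minimal sufficiency of $B$. I expect the only delicate point to be the bookkeeping of the null sets $\Omega_A \setminus \tilde{\Omega}_A$ and $\Omega_B \setminus \tilde{\Omega}_B$; in particular, one must extend $g$ to all of $\Omega_A$ as above rather than merely precompose with $g$, because $h$ need only map into $\tilde{\Omega}_A$ up to a $C$-null set.
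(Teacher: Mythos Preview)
Your proof is correct and follows essentially the same approach as the paper: both use Proposition~\ref{prop:relationorder} to reduce to the minimal sufficiency of $A$, obtain a map $h \colon \Omega_C \to \Omega_A$ with $C_h = A$, and then postcompose with a measurable extension of the strict isomorphism to land in $\Omega_B$. Your single map $\bar{g}$ is exactly the paper's composite $\tilde{f} \circ \tilde{\iota}$ (with $y_0 = f(x_0)$), and your explicit null-set bookkeeping for $A_{\bar{g}} = B$ fills in what the paper asserts without detail.
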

\begin{proof}
We prove the assertion for the $\simeq_w$-minimal sufficient POVM;
the proof for $\simeq_r$-minimal sufficiency can be 
obtained by replacing $\simeq_w$ to $\simeq_r$ in the following proof.

Let $\tilde{\Omega}_A \in \mathscr{B}_A$ and 
$\tilde{\Omega}_B \in \mathscr{B}_B$ 
be full measure sets such that 
$
(\tilde{\Omega}_A , \tilde{\Omega}_A \cap \mathscr{B}_A , A \rvert_{\tilde{\Omega}_A}  )
\approx
(\tilde{\Omega}_B , \tilde{\Omega}_B \cap \mathscr{B}_B , B \rvert_{\tilde{\Omega}_B}  )
$
by a strict isomorphism $f \colon \tilde{\Omega}_A \to \tilde{\Omega}_B.$
To show the $\simeq_w$-minimal sufficiency of $B$, 
we take an arbitrary POVM $(\Omega_C , \mathscr{B}_C , C)$
weakly fuzzy equivalent to $B .$
Then from Proposition~\ref{prop:relationorder},
$C$ is also weakly fuzzy equivalent to $A$ and there exists a mapping
$g \in \mathbb{M} ( (\Omega_C , \mathscr{B}_C)  \to (\Omega_A, \mathscr{B}_A)   )$
such that $C_g = A .$
By fixing a point $x_0 \in \tilde{\Omega}_A$ we define a mapping 
$\tilde{\iota} \colon \Omega_A \to \tilde{\Omega}_A$
by Eq.~\eqref{eq:tiota}.
Then $\tilde{\iota}$ is a measurable map such that
$A \rvert_{\tilde{\Omega}_A} = A_{\tilde{\iota}}$.
Let us denote by $\tilde{f}$  
the isomorphism $f$
regarded as a map from $\tilde{\Omega}_A$ to $\Omega_B .$
Then $\tilde{f}$ is $\tilde{\Omega}_A \cap \mathscr{B}_A / \mathscr{B}_B$-measurable
and $(A \rvert_{ \tilde{\Omega}_A })_{\tilde{f}} = B .$
Thus we have
$
	B 
	=
	(A \rvert_{ \tilde{\Omega}_A })_{\tilde{f}}
	=
	(  A_{\tilde{\iota}}    )_{\tilde{f}}  
	=
	\left(  (C_g )_{\tilde{\iota}} \right)_{\tilde{f}}
	=
	C_{  \tilde{f} \circ  \tilde{\iota}  \circ g  } ,
$
and $B$ is $\simeq_w$-minimal sufficient.
\end{proof}

Now we prove the following theorem
which is the second main result of this paper.
%%%%%%%%%%%%%%%%%%%%%%theorem: existence and uniqueness%%%%%%%%%%%
\begin{theo}
\label{theo:exunique}
\begin{enumerate}
\item
Let 
$(\Omega_A, \mathscr{B}_A , A )$
an arbitrary POVM.
Then there exists a $\simeq_w$-minimal sufficient POVM
$\tilde{A}$
weakly fuzzy equivalent to $A$.
Furthermore such $\tilde{A}$ is unique
up to almost isomorphism.
\item
Let 
$(\Omega_A, \mathscr{B}_A , A )$
a standard Borel POVM.
Then there exists a $\simeq_r$-minimal sufficient POVM
$\tilde{A}$
regularly fuzzy equivalent to $A$.
Furthermore such $\tilde{A}$ is unique
up to almost isomorphism.
\end{enumerate}
\end{theo}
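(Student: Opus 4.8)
The plan is to reduce both assertions to a comparison with the concrete minimal sufficient POVM $A_T$ induced by the LSB statistic of Theorem~\ref{theo:minimal_exists}, and to split each into existence and uniqueness. For existence in assertion~1, I would simply take $\tilde{A} := A_T$, where $T$ is the LSB statistic~\eqref{eq:povmlsb} for $A$: by Theorem~\ref{theo:minimal_exists}, $T$ is sufficient and $A_T$ is $\simeq_w$-minimal sufficient, and the equivalence (i)$\Leftrightarrow$(v) of Theorem~\ref{theo:povmsuff} turns sufficiency of $T$ into $A \simeq_w A_T$. For assertion~2, when $A$ is standard Borel the outcome space $(\mathbb{R}^\infty , \mathscr{B}(\mathbb{R}^\infty))$ of $A_T$ is standard Borel as well, so Proposition~\ref{prop:wimplyr} upgrades $A \simeq_w A_T$ to $A \simeq_r A_T$, and the same $A_T$ serves as a $\simeq_r$-minimal sufficient POVM regularly equivalent to $A$.

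For uniqueness, since $\sim$ is an equivalence relation it suffices to show that an arbitrary $\simeq_w$-minimal sufficient $(\Omega_{\tilde A}, \mathscr{B}_{\tilde A}, \tilde A)$ with $\tilde A \simeq_w A$ is almost isomorphic to $A_T$. Because $\tilde A \simeq_w A \simeq_w A_T$, the $\simeq_w$-minimal sufficiency of $\tilde A$ yields $g \in \mathbb{M}((\mathbb{R}^\infty , \mathscr{B}(\mathbb{R}^\infty)) \to (\Omega_{\tilde A}, \mathscr{B}_{\tilde A}))$ with $(A_T)_g = \tilde A$, and that of $A_T$ yields $f \in \mathbb{M}((\Omega_{\tilde A}, \mathscr{B}_{\tilde A}) \to (\mathbb{R}^\infty , \mathscr{B}(\mathbb{R}^\infty)))$ with $\tilde A_f = A_T$. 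Applying characterization (iii) of Theorem~\ref{theo:povmsuff} to the sufficient statistic $f$ and then the self-representation~\eqref{eq:tself} of $A_T$, I would identify $f$ with the LSB statistic $F(x) := (dP^{\tilde A}_{\rho_i}/dP^{\tilde A}_{\rho_\ast}(x))_{i \geq 1}$ of $\tilde A$, i.e. $f = F$ $\tilde A$-a.e. Running the same computation with the roles of $A_T$ and $\tilde A$ interchanged, and using~\eqref{eq:tself} once more, gives the one-sided inverse relation $F \circ g = \mathrm{id}_{\mathbb{R}^\infty}$ $A_T$-a.e.

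The crux is promoting this to a genuine a.e.\ bijection. Writing $\mu := P^{\tilde A}_{\rho_\ast}$ and $\nu := P^{A_T}_{\rho_\ast}$, the relations $\tilde A_F = A_T$ and $(A_T)_g = \tilde A$ say precisely that $F_\ast \mu = \nu$ and $g_\ast \nu = \mu$, while $F \circ g = \mathrm{id}$ $\nu$-a.e.\ makes the self-map $h := g \circ F$ on $\Omega_{\tilde A}$ both measure-preserving, $h_\ast \mu = \mu$, and idempotent, $h \circ h = h$ $\mu$-a.e. A measure-preserving idempotent is the identity almost everywhere---its essential range consists of fixed points and, by invariance of $\mu$, carries full measure---so $g \circ F = \mathrm{id}$ $\tilde A$-a.e. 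I expect this idempotency step to be the main obstacle, both because it is where the minimal sufficiency of $\tilde A$ (through the mere existence of the factorization $g$) is converted into the absence of redundant, statistically indistinguishable outcomes---the splitting $A_i \mapsto \{\lambda A_i, (1-\lambda)A_i\}$ of the Introduction shows such redundancy would destroy minimal sufficiency---and because identifying a measurable full-measure fixed-point set is delicate for a general $(\Omega_{\tilde A}, \mathscr{B}_{\tilde A})$; here the factorization of $h$ through the standard Borel space $\mathbb{R}^\infty$ via the measurable map $F$ is what keeps the relevant sets measurable.

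Finally I would package the two a.e.\ inverse relations into an almost isomorphism in the sense of Definition~\ref{defi:isomorphism}. Choosing a Borel full-measure $S \subseteq \mathbb{R}^\infty$ with $F(g(t)) = t$ on $S$, and setting $\tilde\Omega_{A_T} := S$ together with the measurable full-measure set $\tilde\Omega_{\tilde A} := F^{-1}(S)$ intersected with the fixed-point set of $h$, the map $F$ restricts to a bijection $\tilde\Omega_{\tilde A} \to S$ whose inverse is the restriction of $g$; bimeasurability is immediate since the inverse is obtained by restricting the measurable maps $F$ and $g$. This exhibits $\tilde A \sim A_T$ and proves uniqueness in assertion~1. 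For assertion~2 the argument is verbatim with $\simeq_w$ replaced by $\simeq_r$ throughout, Proposition~\ref{prop:wimplyr} supplying the regular kernels underlying $f$ and $g$ and the Lusin--Souslin theorem rendering the image $g(S)$ Borel, so that the restriction of $g$ is automatically bimeasurable and $\tilde A \sim A_T$ follows in the standard Borel setting as well.
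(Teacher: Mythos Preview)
Your existence argument for both parts, and your derivation of the one-sided relation $F\circ g=\mathrm{id}$ $A_T$-a.e.\ on $\mathbb{R}^\infty$, coincide with the paper's proof. The divergence is in what you do afterwards.

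You go on to pursue the \emph{other} composite $h=g\circ F$ on $\Omega_{\tilde A}$, arguing that a measure-preserving idempotent must be the identity a.e.; you yourself flag this as the main obstacle, and the concern is real. For a general measurable space $(\Omega_{\tilde A},\mathscr{B}_{\tilde A})$ the fixed-point set $\{x:g(F(x))=x\}$ need not lie in $\mathscr{B}_{\tilde A}$ (the diagonal of $\Omega_{\tilde A}\times\Omega_{\tilde A}$ is not assumed measurable), and the factorization through $\mathbb{R}^\infty$ does not rescue this: from $F(h(x))=F(x)$ one cannot recover $h(x)=x$ without already knowing $F$ to be injective on the relevant set. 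Your proposed $\tilde\Omega_{\tilde A}=F^{-1}(S)\cap\mathrm{Fix}(h)$ is in fact exactly $g(S)$, whose measurability is the very point at issue.

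The paper avoids this detour entirely. It never looks at the composite on the arbitrary space; it works only with the self-map of $\mathbb{R}^\infty$, where the set $\{t:g(f(t))=t\}$ is automatically Borel. Having $g\circ f=\mathrm{id}$ on a full-measure $\tilde\Omega_{\tilde A}\subseteq\mathbb{R}^\infty$ already makes $f\rvert_{\tilde\Omega_{\tilde A}}$ injective with measurable left inverse $g$, and the paper takes the image $\tilde\Omega_B:=f(\tilde\Omega_{\tilde A})=g^{-1}(\tilde\Omega_{\tilde A})\in\mathscr{B}_B$ as the full-measure set on the $B$-side, with $f\rvert_{\tilde\Omega_{\tilde A}}$ itself the strict isomorphism. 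No second inverse relation, no idempotency argument, and no Lusin--Souslin theorem are invoked; part~2 then follows by the same argument, Proposition~\ref{prop:wimplyr} upgrading $\simeq_w$ to $\simeq_r$. In short, the step you identify as the crux is one the paper simply does not take.
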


\begin{proof}
\begin{enumerate}
\item
Let $T\colon \Omega_A \to \mathbb{R}^\infty$
be the LSB statistic given by~\eqref{eq:povmlsb}
and define $\tilde{A} := A_T.$
Then from Theorem~\ref{theo:povmsuff} and Theorem~\ref{theo:minimal_exists},
$\tilde{A}$ is a $\simeq_w$-minimal sufficient POVM weakly fuzzy equivalent to $A$.
From 
%Lemma~\ref{lemm:lsbfuzzy},
%$\tilde{A}$ satisfies Eq.~\eqref{eq:tself}.
the same discussion in Theorem~\ref{theo:minimal_exists},
$\tilde{A}$ satisfies the following condition:
\begin{equation}
	t =
	\left(
	\frac{d P^{\tilde{A}}_{\rho_i} }{ d P^{\tilde{A}}_{\rho_\ast}  }
	(t)
	\right)_{i \geq 1}
	\quad
	\tilde{A}\text{-a.e.}
	\label{eq:self1}
\end{equation}

To prove the uniqueness up to almost isomorphism,
take another $\simeq_w$-minimal sufficient POVM 
$(\Omega_B, \mathscr{B}_B , B )$
weakly fuzzy equivalent to $A$.
Since $B \simeq_w A \simeq_w \tilde{A}$,
%and $\tilde{A}$ and $B$ are minimal sufficient,
there exist mappings
$f \in \mathbb{M } (  (\mathbb{R}^\infty , \mathscr{B} (\mathbb{R}^\infty ) ) \to (\Omega_B , \mathscr{B}_B)  )$
and
$g \in \mathbb{M } (  (\Omega_B , \mathscr{B}_B) \to  (\mathbb{R}^\infty , \mathscr{B} (\mathbb{R}^\infty ) )  )$
such that
$
	\tilde{A}_f = B 
$
and
$
	B_g = \tilde{A}.
$
Then we have $\tilde{A}_{g \circ f} = \tilde{A}$,
and from Theorem~\ref{theo:povmsuff}, $g \circ f \colon \mathbb{R}^\infty \to \mathbb{R}^\infty$
is a sufficient statistic for $\tilde{A} $ and we have
\begin{equation}
	\frac{d  P^{\tilde{A}}_{\rho_i}  }{ d P^{\tilde{A}}_{\rho_\ast}  }
	(t)
	=
	\frac{d  P^{\tilde{A}}_{\rho_i}  }{ d P^{\tilde{A}}_{\rho_\ast}  }
	(g \circ f(t))
	\quad
	\tilde{A}\text{-a.e.}
	\label{eq:self2}
\end{equation}
for each $i \geq 1.$
From Eqs.~\eqref{eq:self1} and \eqref{eq:self2},
we obtain
\[
	t
	=
	\left(
	\frac{d P^{\tilde{A}}_{\rho_i} }{ d P^{\tilde{A}}_{\rho_\ast}  }
	(t)
	\right)_{i \geq 1}
	=
	\left(
	\frac{d P^{\tilde{A}}_{\rho_i} }{ d P^{\tilde{A}}_{\rho_\ast}  }
	(g\circ f (t))
	\right)_{i \geq 1}
	=
	g \circ f (t)
	\quad
	\tilde{A}\text{-a.s.}
\]
Thus there exists a full measure set 
$\tilde{\Omega}_{\tilde{A}} \in \mathscr{B} (\mathbb{R}^\infty)$
such that $( g \circ f ) \rvert_{\tilde{\Omega}_{\tilde{A}}  }$ is an identity map on 
$\tilde{\Omega}_{\tilde{A}}.$
Then $\tilde{\Omega}_B := f (\tilde{\Omega}_{\tilde{A}}) = g^{-1} ( \tilde{\Omega}_{\tilde{A}}  ) \in \mathscr{B}_B $
is a full measure set
and $f \rvert_{\tilde{\Omega}_{\tilde{A}}  }$
is a strict isomorphism between
$(
\tilde{\Omega}_{\tilde{A}} ,
\tilde{\Omega}_{\tilde{A}} \cap \mathscr{B} ( \mathbb{R}^\infty) ,
\tilde{A} \rvert_{\tilde{\Omega}_{\tilde{A}} }
)$ 
and
$(
\tilde{\Omega}_{B} ,
\tilde{\Omega}_{B} \cap \mathscr{B}_B ,
B \rvert_{\tilde{\Omega}_{B} }
)$.
Therefore $\tilde{A}$ and $B$ are almost isomorphic.
\item
As shown in 1, 
the POVM $\tilde{A}$ induced by the LSB statistic for $A$ is  
$\simeq_w$-minimal sufficient POVM,
and therefore $\simeq_r$-minimal sufficient POVM,
weakly fuzzy equivalent to $A$.
Since $A$ and $\tilde{A}$ are standard Borel POVMs,
they are regularly fuzzy equivalent.
The uniqueness up to almost isomorphism can be shown in a similar manner.
\qedhere
\end{enumerate}
\end{proof}

%\subsection{Structure of fuzzy preorder and equivalence relations}
%\label{subsec:structure}
%\subsection{Relationship to minimal sufficient statistics}
%\label{subsec:msstat}

\section{Minimal sufficiency for discrete POVM}
\label{sec:discrete}
In this section, we consider the minimal sufficient condition for
discrete POVMs.

A measurable space $( \Omega , \mathscr{B} )$
is said to be discrete
if $\Omega$ is a countable set
and $\mathscr{B}$ is the power set 
$\mathcal{P}(\Omega) $ 
of $\Omega .$
A POVM $A$ is said to be discrete if
the outcome space of $A$ is a discrete space.
%$(\Omega_A , \mathcal{P}(\Omega_A))$.
A discrete POVM $(\Omega_A , \mathcal{P}(\Omega_A) , A)$
induces a mapping $A \colon \Omega_A \to \mathcal{L}(\mathcal{H})$
defined by $A(x) := A(\{  x\}) \geq O$
with a completeness condition
\begin{equation}
	\sum_{x \in \Omega_A} A(x) =I.
	\label{eq:completeness}
\end{equation}
On the other hand, a positive-operator valued mapping 
$A\colon \Omega_A \to \mathcal{L}(\mathcal{H})$
with the completeness condition~\eqref{eq:completeness}
induces a discrete POVM by
\[
	A(E)
	:=
	\sum_{x \in E} A(x)
\]
for each $E \in \mathcal{P}( \Omega_A ).$
Thus throughout this section we identify a positive-operator valued mapping 
$A\colon \Omega_A \to \mathcal{L}(\mathcal{H})$
satisfying the completeness condition with a discrete POVM.
We write $p^A_{\rho} (x) := \tr [ \rho A(x) ]$
for each $\rho \in \stsp .$

A discrete POVM 
$A\colon \Omega_A \to \mathcal{L}(\mathcal{H})$
is said to be \textit{non-vanishing}
if $A(x) \neq 0$ for all $x \in \Omega_A .$
Any discrete POVM is almost isomorphic to a non-vanishing POVM.

Since a discrete space is a Standard Borel space, 
the weak relations $\preceq_w$ and $\simeq_w$
coincide with the regular relations
$\preceq_r$ and $\simeq_r ,$
respectively.
Thus for discrete POVMs, fuzzy preorder and equivalence relations are denoted as
$\preceq$ and $\simeq ,$ respectively.

For discrete POVMs 
$A\colon \Omega_A \to \mathcal{L}(\mathcal{H})$
and
$B\colon \Omega_B \to \mathcal{L}(\mathcal{H})$
the relations $\approx$, 
$\sim$,
and  $\preceq$ are simplified as follows.
$A \approx B$ if and only if
there exists a bijection $f \colon \Omega_A \to \Omega_B$
such that 
$A_f = B ,$
where
$A_f (y) = \sum_{x \in \Omega_A} \delta_{y , f(x)}  A (x) .$
$A \sim B$ if and only if 
there exist full-measure subsets 
$\tilde{\Omega}_A \subset \Omega_A$ 
and 
$\tilde{\Omega}_B \subset \Omega_B$
such that
the restrictions 
$A \rvert_{\tilde{\Omega}_A}
%=
%\{ A(x)  \}_{  x \in  \tilde{\Omega}_A }
$
and
$B \rvert_{ \tilde{\Omega}_B }
%=
%\{ B (y)   \}_{y \in \tilde{\Omega}_B  }
$
are non-vanishing
and strictly isomorphic.
$A \preceq B$
if and only if
there exists a matrix $\{ \kappa (x | y) \}_{(x,y) \in \Omega_A \times \Omega_B}$
such that
\begin{gather}
	\kappa (x | y) \geq 0 ,
	\quad
	\sum_{x\in \Omega_A}
	\kappa (x|y)
	=1,
	\label{eq:mm}
	\\
	A(x)
	=
	\sum_{y \in \Omega_B}
	\kappa (x|y) B(y) .
	\notag
\end{gather}
A matrix $\kappa (\cdot | \cdot)$ satisfying the condition~\eqref{eq:mm}
is called a Markov matrix.

The following proposition characterizes the sufficiency condition 
of a statistic for a discrete POVM.
%%%%%%%%%%%%%%%%%%%%%%prop  sufficient statistic for discrete povm 
\begin{prop}
\label{prop:discpovm}
Let 
$A\colon \Omega_A \to \mathcal{L}(\mathcal{H})$
be a discrete POVM
and let
$T \colon \Omega_A \to \Omega_T$
be a mapping to a measurable space 
$(\Omega_T , \mathscr{B}_T) .$
We assume that $\mathscr{B}_T$ contains each single point set $\{ t \} $
$(t \in \Omega_T)$.
Then the following conditions are equivalent.
\begin{enumerate}[(i)]
\item
$T$ is sufficient for $A$;
\item
there exist functions
$h (\cdot ) \colon \Omega_A \ni x \mapsto  h(x) \in [0, \infty)$
and 
$
g_{\cdot} (\cdot) \colon
\stsp \times \Omega_T
\ni 
(\rho , t)
\mapsto
g_{\rho} (t)
\in
[0 , \infty)
$
such that
\begin{equation}
	p^A_{\rho} (x)
	=
	h(x)
	g_{\rho} ( T(x) )
	,
	\quad
	(\rho \in \stsp  ,
	x \in \Omega_A);
	\notag
\end{equation}
\item
there exist functions
$h (\cdot ) \colon \Omega_A \ni x \mapsto h(x)  \in [0, \infty)$
and 
$
G  (\cdot) \colon
\Omega_T
\ni t
\mapsto
G  (t)
\in
\mathcal{L}_+ (\mathcal{H})
$
such that
\begin{equation}
	A (x)
	=
	h(x)
	G( T(x) )
	,
	\quad
	(x \in \Omega_A),
	\notag
\end{equation}
where $\mathcal{L}_+ (\mathcal{H}) := \Set{  a \in \mathcal{L} (\mathcal{H}) | a \geq O  };$
\item
$A \simeq A_T .$
\end{enumerate}
\end{prop}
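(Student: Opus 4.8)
The plan is to deduce the equivalence (i) $\Leftrightarrow$ (iv) directly from the general theory and to treat the two genuinely new conditions (ii) and (iii) by a discrete Fisher--Neyman factorization argument. Since a discrete space is standard Borel and the relations $\simeq_r$ and $\simeq_w$ coincide there, Theorem~\ref{theo:povmsuff} (equivalence of (i), (v) and (vi) for standard Borel $A$) already yields (i) $\Leftrightarrow$ (iv), because condition (iv), $A \simeq A_T$, is exactly $A \simeq_r A_T = A \simeq_w A_T$. Hence it suffices to prove (i) $\Leftrightarrow$ (ii) and (ii) $\Leftrightarrow$ (iii). Throughout I use that in the discrete case the Radon--Nikod\'ym derivative is the ratio of point masses, $\frac{dP^A_\rho}{dP^A_{\rho_\ast}}(x) = p^A_\rho(x)/p^A_{\rho_\ast}(x)$ on $\{x : p^A_{\rho_\ast}(x) > 0\}$, that $A$-a.e. means ``for every $x$ with $p^A_{\rho_\ast}(x) > 0$'', and that by Proposition~\ref{prop:aconti} one has $p^A_{\rho_\ast}(x) = 0 \Rightarrow p^A_\rho(x) = 0$ for every $\rho$.

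For (i) $\Leftrightarrow$ (ii) I would invoke Theorem~\ref{theo:povmsuff}(ii). If $T$ is sufficient, that condition gives a $\mathscr{B}_T$-measurable $g_\rho$ with $p^A_\rho(x)/p^A_{\rho_\ast}(x) = g_\rho(T(x))$ whenever $p^A_{\rho_\ast}(x) > 0$; setting $h(x) := p^A_{\rho_\ast}(x)$ then yields $p^A_\rho(x) = h(x) g_\rho(T(x))$ on the support, while on $\{p^A_{\rho_\ast} = 0\}$ both sides vanish by absolute continuity, proving (ii). Conversely, given the factorization (ii), on $\{x : p^A_{\rho_\ast}(x) > 0\}$ the value $h(x) g_{\rho_\ast}(T(x)) = p^A_{\rho_\ast}(x)$ is positive, so $h(x) > 0$ and $g_{\rho_\ast}(T(x)) > 0$, and dividing gives $p^A_\rho(x)/p^A_{\rho_\ast}(x) = g_\rho(T(x))/g_{\rho_\ast}(T(x))$, a quantity depending only on $T(x)$; this is exactly the condition in Theorem~\ref{theo:povmsuff}(ii), hence (i). The only point requiring the added hypothesis on $\mathscr{B}_T$ is the measurability of the descended function: since $T(\Omega_A)$ is countable and each singleton lies in $\mathscr{B}_T$, any function supported on $T(\Omega_A)$ is automatically $\mathscr{B}_T$-measurable, so no measurability of the $g_\rho$ in (ii) need be assumed.

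The implication (iii) $\Rightarrow$ (ii) is immediate: if $A(x) = h(x) G(T(x))$ with $G(t) \in \mathcal{L}_+(\mathcal{H})$, then $p^A_\rho(x) = \tr[\rho A(x)] = h(x)\,\tr[\rho G(T(x))]$, so $g_\rho(t) := \tr[\rho G(t)] \geq 0$ works. The substantive step is (ii) $\Rightarrow$ (iii), where the scalar factorization must be promoted to an operator factorization. Here I would fix $t \in \Omega_T$ and work inside the fibre $T^{-1}(t)$. For $x, x' \in T^{-1}(t)$ the identities $\tr[\rho A(x)] = h(x) g_\rho(t)$ and $\tr[\rho A(x')] = h(x') g_\rho(t)$ give $\tr[\rho\,(h(x') A(x) - h(x) A(x'))] = 0$ for every $\rho \in \stsp$; since states separate bounded operators (if $\tr[\rho B] = 0$ for all $\rho$ then $B = O$), this forces $h(x') A(x) = h(x) A(x')$. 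Thus $A(x)/h(x)$ is one and the same positive operator for all $x$ in the fibre with $h(x) > 0$, which I define to be $G(t)$ (setting $G(t) := O$ when $h \equiv 0$ on the fibre, and for $t \notin T(\Omega_A)$). Finally $A(x) = h(x) G(T(x))$ holds at every $x$: on $\{h(x) > 0\}$ by construction, and on $\{h(x) = 0\}$ because (ii) then forces $\tr[\rho A(x)] = 0$ for all $\rho$, whence $A(x) = O = h(x) G(T(x))$; positivity $G(t) \geq O$ follows from $A(x_0) \geq O$ with $h(x_0) > 0$.

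I expect the main obstacle to be precisely this last upgrade (ii) $\Rightarrow$ (iii): everything scalar is controlled by the classical factorization, but recovering a single positive operator $G(t)$ per fibre requires the separation of operators by states, and one must carefully account for the outcomes $x$ with $h(x) = 0$ (equivalently $A(x) = O$) so that the identity $A(x) = h(x) G(T(x))$ holds pointwise on all of $\Omega_A$, not merely on the support of $P^A_{\rho_\ast}$.
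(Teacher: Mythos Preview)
Your proof is correct and follows essentially the same route as the paper. The paper dispatches (i)$\Leftrightarrow$(ii)$\Leftrightarrow$(iv) in one line by invoking Theorem~\ref{theo:povmsuff}, while you spell out the discrete Radon--Nikod\'ym details explicitly; and for (ii)$\Rightarrow$(iii) the paper observes that $g_\rho(t)=p^A_\rho(x)/h(x)$ is affine and positive in $\rho$ and then quotes the representation theorem to produce $G(t)$, whereas you construct $G(t):=A(x)/h(x)$ directly and verify well-definedness via the identity $h(x')A(x)=h(x)A(x')$ obtained from separation by states. Both arguments land on the same operator $G(t)$; yours is slightly more self-contained in that it avoids citing the affine-functional representation result, while the paper's is shorter.
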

\begin{proof}
The equivalence (i)$\Leftrightarrow$(ii)$\Leftrightarrow$(iv) is evident from Theorem~\ref{theo:povmsuff}.
The implication 
(iii)$\Rightarrow$(ii)
immediately follows by putting $g_{\rho} (t) := \tr [\rho G(t)].$
Let us assume (ii) and show (iii).
If $t \in \Omega_T $ satisfies
that
\begin{equation}
	\exists x \in \Omega_A
	\text{
	such that
	$T(x) =t$ 
	and
	$h(x) \neq 0 ,$
	}
	\label{eq:cond}
\end{equation}
then we have
\begin{equation}
	g_{\rho} (t)
	=
	p^A_\rho (x) /h(x).
	\label{eq:grho}
\end{equation}
Since the RHS of Eq.~\eqref{eq:grho}
is affine and positive with respect to $\rho \in \stsp ,$
according to Refs.~\onlinecite{davies1976quantum,heinosaari2011mathematical},
there exists $G (t) \in \mathcal{L}_+ (\mathcal{H})$ such that
$g_{\rho} (t) = \tr [ \rho G(t)  ]$ for any $\rho \in \stsp .$
For $t \in \Omega_T$ that does not satisfy the condition~\eqref{eq:cond},
we define $G(t)  = O.$
Then we have
\[
	\tr[\rho A(x)]
	=
	p^A_{\rho} (x)
	=
	\tr
	[
	\rho
	h(x) G(T(x))
	]
\]
for each $\rho \in \stsp$ and $x \in \Omega_A ,$
which implies the condition~(iii).
\end{proof}

Corresponding to Theorem~\ref{theo:exunique}
we have the following theorem as to the existence and uniqueness of
a
minimal sufficient POVM.
%%%%%%%%%%%%theorem discrete existence and uniqueness
\begin{theo}
\label{theo:discexuni}
Let
$A\colon \Omega_A \to \mathcal{L}(\mathcal{H})$
be a discrete POVM.
Then there exists a discrete non-vanishing $\simeq_w$-minimal sufficient POVM
$\bar{A}$
fuzzy equivalent to $A$.
Furthermore such $\bar{A}$ is unique up to strict isomorphism.
\end{theo}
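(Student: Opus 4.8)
The plan is to construct $\bar{A}$ from the Lehmann--Scheff\'e--Bahadur statistic, observe that it is automatically discrete because $\Omega_A$ is countable, and then to upgrade the almost-isomorphism uniqueness of Theorem~\ref{theo:exunique} to strict isomorphism by exploiting the non-vanishing condition. First I would take the LSB statistic $T\colon \Omega_A \to \mathbb{R}^\infty$ of~\eqref{eq:povmlsb} and set $\tilde{A} := A_T$, which by Theorem~\ref{theo:minimal_exists} is a $\simeq_w$-minimal sufficient POVM weakly fuzzy equivalent to $A$. The key point is that since $\Omega_A$ is countable, its image $T(\Omega_A) \subset \mathbb{R}^\infty$ is countable, every singleton there is measurable, and
\[
	\tilde{A}(\{t\}) = A(T^{-1}(\{t\})) = \sum_{x \in \Omega_A :\, T(x) = t} A(x),
	\quad
	\tilde{A}(\mathbb{R}^\infty \setminus T(\Omega_A)) = O .
\]
Hence $\tilde{A}$ is carried by the countable full-measure set $T(\Omega_A)$ and is, up to almost isomorphism, a discrete POVM.

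Next I would pass to a non-vanishing representative. Putting $\bar{\Omega} := \Set{ t \in T(\Omega_A) | \tilde{A}(\{t\}) \neq O }$ and $\bar{A}(t) := \tilde{A}(\{t\})$ for $t \in \bar{\Omega}$, the discarded points contribute $O$, so $\sum_{t \in \bar{\Omega}} \bar{A}(t) = I$, whence $\bar{\Omega}$ is a full-measure set and $\bar{A} \sim \tilde{A}$. By Proposition~\ref{prop:isomini} the $\simeq_w$-minimal sufficiency is inherited by $\bar{A}$, and by transitivity $\bar{A} \simeq_w A$; since $\bar{A}$ and $A$ are discrete this is the fuzzy equivalence $\bar{A} \simeq A$. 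This establishes the existence of a discrete non-vanishing $\simeq_w$-minimal sufficient POVM equivalent to $A$.

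For uniqueness, suppose $\bar{A}$ and $\bar{B}$ are two such POVMs. Both are $\simeq_w$-minimal sufficient and weakly fuzzy equivalent to $A$, so $\bar{A} \simeq_w \bar{B}$, and the uniqueness part of Theorem~\ref{theo:exunique}(1) gives $\bar{A} \sim \bar{B}$. The hard part, and the genuine content of the statement, will be to turn this almost isomorphism into a strict one: the observation to make is that for a \emph{non-vanishing} discrete POVM the only full-measure subset is the whole outcome space. Indeed, if $\bar{A}(\tilde{\Omega}) = I$ then $\sum_{x \notin \tilde{\Omega}} \bar{A}(x) = O$, and because a countable sum of positive operators can vanish only when each summand does, non-vanishing forces $\tilde{\Omega} = \Omega_{\bar{A}}$ (and likewise for $\bar{B}$). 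Therefore the full-measure sets witnessing $\bar{A} \sim \bar{B}$ are the entire spaces, and the strict isomorphism of the restrictions provided by $\sim$ is already a strict isomorphism $\bar{A} \approx \bar{B}$ of the POVMs themselves, proving uniqueness up to strict isomorphism.
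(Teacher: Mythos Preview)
Your proof is correct but proceeds differently from the paper's. For existence, the paper first quotients $\Omega_A$ by the relation $x\sim_A x'\Leftrightarrow A(x)=cA(x')$ for some $c>0$, obtaining $\bar A:=A_S$ directly as a pairwise linearly independent discrete POVM, and then shows $\bar A$ is $\simeq_w$-minimal sufficient by checking that the LSB statistic on $\bar A$ is injective (so $\bar A$ is almost isomorphic to the minimal sufficient $\bar A_T$). You instead apply the LSB statistic to $A$ itself, use countability of $\Omega_A$ to see that $A_T$ is carried by a countable set, and then trim zero atoms; this is shorter and reuses Theorems~\ref{theo:minimal_exists} and~\ref{theo:exunique} and Proposition~\ref{prop:isomini} as black boxes. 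For uniqueness, the paper argues hands-on: from the minimal sufficiency of $B$ one gets a map $f\colon\Omega_S\to\Omega_B$ with $\bar A_f=B$, then shows $f$ is surjective from the non-vanishing of $B$ and injective from the pairwise linear independence built into the quotient construction of $\bar A$. Your route---invoke Theorem~\ref{theo:exunique} to get $\bar A\sim\bar B$, then observe that a non-vanishing discrete POVM has no proper full-measure subset, so $\sim$ collapses to $\approx$---is more conceptual and does not depend on how $\bar A$ was built. The trade-off is that the paper's explicit quotient gives an intrinsic description of $\bar A$ (proportionality classes) that feeds directly into the pairwise-linear-independence characterization of Theorem~\ref{theo:discequi}, whereas your argument is more modular but less constructive.
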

\begin{proof}
Since $A$ is almost isomorphic to a non-vanishing POVM,
without loss of generality we can assume $A$ is non-vanishing.
We define an equivalence relation on $\Omega_A$ by
\[
	x \sim_A x^\prime
	:\Leftrightarrow
	\exists c >0,
	A(x) = c A(x^\prime),
\]
and define a mapping 
$S \colon \Omega_A \to  \Omega_A / \sim_A = : \Omega_S $
by
$S (x) := [x] ,$
where $[x]$ is the equivalence class to which $x$ belongs.
Then from the definition of $\sim_A$,
we can write
\[
	A(x) = h (x) G( S (x)) 
\]
for each $x \in \Omega_A$,
where $h(x)>0$ and $O \neq G(s) \in \mathcal{L}_+( \mathcal{H} )$
$(s \in \Omega_S)$,
which implies $S$ is sufficient for $A.$
Therefore if we define
$\bar{A}:= A_S$,
$\bar{A} \simeq A$ and $\bar{A}$ is a non-vanishing discrete POVM.

%By noting that a discrete space is a standard Borel space,
From Theorem~\ref{theo:exunique} and its prove,
the LSB statistic $T \colon \Omega_S \to \mathbb{R}^\infty$
defined by \eqref{eq:povmlsb} induces a $\simeq_w$-minimal sufficient POVM
$(\mathbb{R}^\infty, \mathscr{B} (\mathbb{R}^\infty) , \bar{A}_T)$
fuzzy equivalent to $\bar{A}.$
Since $T$ is sufficient for $\bar{A}$,
from Proposition~\ref{prop:discpovm},
we can write 
\[
	\bar{A}(s)
	=
	h^\prime (s)
	G^\prime(T(s)) ,
\]
where $h^\prime (s)>0$ and $O \neq G^\prime (t) \in \mathcal{L}_+( \mathcal{H} )$
$(t \in \mathbb{R}^\infty)$.
From the construction of $S$,
this implies that $T$ is injective,
and $T$ is a strict isomorphism between
$(\Omega_S , \mathcal{P}(\Omega_S)  ,  \bar{A})$
and
$(T (\Omega_S)  , T ( \Omega_S ) \cap \mathscr{B} (\mathbb{R}^\infty) , (\bar{A}_S) \rvert_{ T (\Omega_S)  }  ) .$
Note that $T (\Omega_S) \in \mathscr{B} (\mathbb{R}^\infty)$
since $T (\Omega_S)$ is a countable set. 
Thus from
Proposition~\ref{prop:isomini},
$\bar{A}$ is a $\simeq_w$-minimal sufficient POVM.

To show the uniqueness,
let
$B\colon \Omega_B \to \mathcal{L}(\mathcal{H})$
be an arbitrary $\simeq_w$-minimal sufficient and non-vanishing discrete POVM
equivalent to $A$.
Since $\bar{A} \simeq B ,$
there exists a mapping
$f \colon \Omega_S \to \Omega_B$
%and
%$g \colon \Omega_B \to \Omega_S$
such that
$A_f = B$. 
%and $B_g = A .$
If $f$ is not surjective,
there exists an element
$y \in \Omega_B \setminus f (\Omega_S)$
and we have
$B(y) = A_f (y) = \sum_{x \colon f(x) = y} A(x) = O $,
which contradicts the non-vanishing property of $B$.
Thus $f$ is surjective.
Since $f$ is sufficient for $\bar{A}$,
by a similar discussion for $T$,
we can show that $f$ is injective.
Therefore $f$ is a strict isomorphism between 
$\bar{A}$ and $B$,
which completes the proof.
\end{proof}

A discrete POVM
$A\colon \Omega_A \to \mathcal{L}(\mathcal{H})$
is said to be pairwise linearly independent~\cite{martens1990}
if any pair $\{ A(x) , A(x^\prime) \}$
$(x \neq x^\prime)$
is linearly independent.
A pairwise linearly independent POVM is non-vanishing by definition.
The following theorem states that
the minimal sufficiency and the pairwise linearly independence are
almost equivalent.

%%%%%%%%%%%%%theorem minimal sufficiency and pairwise linearly independence
\begin{theo}
\label{theo:discequi}
Let $A\colon \Omega_A \to \mathcal{L}(\mathcal{H})$ 
be a discrete POVM.
Then the following conditions are equivalent:
\begin{enumerate}[(i)]
\item
$A$ is pairwise linearly independent;
\item
$A$ is non-vanishing and $\simeq_w$-minimal sufficient.
\item
$A$ is non-vanishing and $\simeq_r$-minimal sufficient.
\end{enumerate}
\end{theo}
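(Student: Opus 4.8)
The plan is to prove the cyclic chain of implications (i)$\Rightarrow$(ii)$\Rightarrow$(iii)$\Rightarrow$(i). The implication (ii)$\Rightarrow$(iii) is immediate, since every $\simeq_w$-minimal sufficient POVM is $\simeq_r$-minimal sufficient by definition and the non-vanishing property is retained verbatim. The two substantive implications are therefore (i)$\Rightarrow$(ii) and (iii)$\Rightarrow$(i). Throughout I would use the elementary fact that for two nonzero positive operators $A(x)$ and $A(x')$, linear dependence is equivalent to $A(x) = c A(x')$ for some $c > 0$, i.e. to $x \sim_A x'$ in the notation of the proof of Theorem~\ref{theo:discexuni}. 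Thus $A$ is pairwise linearly independent precisely when $A$ is non-vanishing and the relation $\sim_A$ is trivial, a reformulation that connects the hypothesis directly to the reduction map $S\colon \Omega_A \to \Omega_A/\sim_A$ already built in the proof of Theorem~\ref{theo:discexuni}.

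For (i)$\Rightarrow$(ii), I would start from the observation that pairwise linear independence makes $S$ injective, hence a bijection onto its image. Consequently the reduction $\bar A = A_S$ is strictly isomorphic to $A$, so in particular $A \sim \bar A$. Since Theorem~\ref{theo:discexuni} guarantees that $\bar A$ is $\simeq_w$-minimal sufficient and non-vanishing, Proposition~\ref{prop:isomini} transports this property across the almost isomorphism to conclude that $A$ itself is non-vanishing and $\simeq_w$-minimal sufficient.

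The core of the argument is (iii)$\Rightarrow$(i), which I would prove by contradiction. Assume $A$ is non-vanishing and $\simeq_r$-minimal sufficient but not pairwise linearly independent, so that the reduction map $S$ collapses some pair $x_0 \neq x_1$. Because $\bar A = A_S \simeq_r A$ by Theorem~\ref{theo:discexuni}, the $\simeq_r$-minimal sufficiency of $A$ applied to $B = \bar A$ yields a measurable $f$ with $(\bar A)_f = A$; composing gives a self-map $\phi := f \circ S\colon \Omega_A \to \Omega_A$ with $A_\phi = (A_S)_f = A$ that still satisfies $\phi(x_0) = \phi(x_1)$. The key lemma I would isolate is that any such $\phi$ must be a bijection, which contradicts this collapse and closes the cycle.

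To establish the lemma, I would pass to the probability weights $q(x) := p^A_{\rho_\ast}(x) = \tr[\rho_\ast A(x)]$. Proposition~\ref{prop:aconti} (pivotality of $P^A_{\rho_\ast}$) forces $q(x) > 0$ for every $x$ when $A$ is non-vanishing, and applying $\tr[\rho_\ast \,\cdot\,]$ to $A_\phi = A$ pushes forward to $q(z) = \sum_{x \in \phi^{-1}(z)} q(x)$, so that $\phi$ preserves the fully supported probability $q$ on the countable set $\Omega_A$. Surjectivity of $\phi$ is immediate, since a point outside the image would carry zero $q$-mass. Injectivity is the main obstacle, and for the countably infinite case I would argue that any point $x$ whose forward orbit never returns to $x$ has pairwise disjoint backward fibers $\phi^{-n}(\{x\})$, each of $q$-measure $q(x) > 0$ by measure preservation; their disjoint union would then carry infinite mass, which is impossible. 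Hence every point is periodic, the functional graph of $\phi$ is a disjoint union of cycles, and $\phi$ is necessarily injective. This is precisely the recurrence-type argument where the interplay of full support, finite total mass, and countability does the work, and it is the step I expect to require the most care.
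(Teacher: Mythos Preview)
Your proposal is correct. For (i)$\Rightarrow$(ii) and (ii)$\Rightarrow$(iii) you follow exactly the paper's route: the quotient map $S$ is injective under pairwise linear independence, $\bar A$ is strictly isomorphic to $A$, and Proposition~\ref{prop:isomini} transfers $\simeq_w$-minimal sufficiency.

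For (iii)$\Rightarrow$(i) your argument is sound but genuinely different from the paper's. The paper simply invokes the uniqueness clause of Theorem~\ref{theo:discexuni}: $A$ and $\bar A$ are both non-vanishing, discrete, minimal sufficient, and fuzzy equivalent, so they are strictly isomorphic; since $\bar A$ is pairwise linearly independent by construction, so is $A$. (The uniqueness proof in Theorem~\ref{theo:discexuni} already shows that the map $f\colon \Omega_{\bar A}\to\Omega_A$ coming from minimal sufficiency of $A$ is a bijection, using non-vanishing of $A$ for surjectivity and the factorization from Proposition~\ref{prop:discpovm} together with pairwise linear independence of $\bar A$ for injectivity.) You instead compose $f$ with $S$ to get a self-map $\phi$ of $\Omega_A$ with $A_\phi=A$ and prove, via a Poincar\'e-recurrence argument on the fully supported pushforward-invariant probability $q=p^A_{\rho_\ast}$, that every point is periodic and hence $\phi$ is a bijection. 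Your route is more self-contained---it uses only the existence part of Theorem~\ref{theo:discexuni}, not its uniqueness---and the recurrence lemma is a nice stand-alone fact; the price is that it is noticeably longer than the paper's two-line appeal to uniqueness, which already encodes the needed bijectivity.
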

\begin{proof}
We first show (i)$\Rightarrow  $(ii).
Assume that $A$ is a pairwise linearly independent POVM.
Then
if we consider the mapping $S$ in the proof of Theorem~\ref{theo:discexuni},
$S$ is an injection and the POVM $\bar{A}$ induced by $S$ is strictly isomorphic to $A$.
Since $\bar{A}$ is $\simeq_w$-minimal sufficient, $A$ is also $\simeq_w$-minimal sufficient.
Thus we have shown (i)$\Rightarrow$(ii).

(ii)$\Rightarrow$(iii) is obvious.

If $A$ is 
non-vanishing and $\simeq_r$-minimal sufficient,
then $\bar{A}$ induced by the statistic
$S$ is strictly isomorphic to $A$ 
by the uniqueness of the minimal sufficient POVM.
Since $\bar{A}$ is pairwise linearly independent,
$A$ is also pairwise linearly independent.
\end{proof}

In Ref.~\onlinecite{martens1990}
it is shown that
for each discrete POVM $A$ there exists a pairwise linearly independent POVM
fuzzy equivalent to $A$
and such a POVM is unique up to strict isomorphism.
This assertion
is a direct corollary of
our Theorem~\ref{theo:discexuni}
and Theorem~\ref{theo:discequi}.

\section{Information conservation condition}
\label{sec:cons}
In this brief section, we consider
information conservation conditions 
proposed by the author~\cite{PhysRevA.91.032110,:/content/aip/journal/jmp/56/9/10.1063/1.4931625}.

Let 
$(\Omega_1 , \mathscr{B}_1)$
be a measurable space.
% and let
%$(\Omega_2 , \mathscr{B}_2 , A_2)$
%be a standard Borel POVM.
A completely positive (CP) instrument~\cite{content/aip/journal/jmp/25/1/10.1063/1.526000}
$\mathcal{I}^1_{\cdot} (\cdot)$
(in the Heisenberg picture)
with the outcome space $(\Omega_1 , \mathscr{B}_1)$ is a mapping
\[
	\mathcal{I}^1_{\cdot} (\cdot) \colon
	\mathscr{B}_1 \times \mathcal{L}(\mathcal{H})
	\ni
	(E_1 , a)
	\mapsto
	\mathcal{I}^1_{E_1} (a)
	\in
	\mathcal{L}(\mathcal{H})
\]
such that
\begin{enumerate}[(i)]
\item
for each countable disjoint $\{ E_j \} \subset \mathscr{B}_1$
and each $\rho \in \stsp$ and $a \in \mathcal{L} (\mathcal{H})$,
$
	\tr[ \rho  \mathcal{I}_{\cup_j E_j }^1 (a) ]
	=
	\sum_j
	\tr[ \rho  \mathcal{I}_{ E_j }^1 (a) ] ;
$
\item
$\mathcal{I}_{\Omega_1} (I) =I$;
\item
$\mathcal{I}^1_{E} (\cdot)$ is a normal CP linear map for every $E \in \mathscr{B}_1 .$
\end{enumerate}
A CP instrument simultaneously 
describes the probability distribution of the outcome of a quantum measurement process
and the state change due to the measurement.

Let $\mathcal{I}^1_\cdot (\cdot)$
be a CP instrument with a standard Borel outcome space
$(\Omega_1 , \mathscr{B}_1)$
and let 
$(\Omega_2 , \mathscr{B}_2 , A_2)$
be a standard Borel POVM.
A composition~\cite{davies1976quantum,:/content/aip/journal/jmp/56/9/10.1063/1.4931625} 
$\mathcal{I}^1 \ast A_2$ is a unique POVM
with the product outcome space 
$(\Omega_1 \times \Omega_2 , \mathscr{B}_1 \times \mathscr{B}_2)$
such that 
\[
	( \mathcal{I}^1 \ast A_2 )
	(E_1 \times E_2)
	=
	\mathcal{I}^1_{E_1} (A_2 (E_2))
\]
for each $E_1 \in \mathscr{B}_1$
and
$E_2 \in \mathscr{B}_2$.
The composition corresponds to the joint successive measurement process of $\mathcal{I}^1_\cdot (\cdot)$
followed by $A_2 .$

For a given CP instrument $\mathcal{I}^1_\cdot (\cdot)$ with a standard Borel outcome space
and a given standard Borel POVM $A_2$,
we consider the following two conditions.
\begin{enumerate}
\item
\label{en:cond1}
There exists a sufficient statistic $\tilde{x} \colon \Omega_1 \times \Omega_2 \to \Omega_2$
such that $(\mathcal{I}^1 \ast A_2)_{\tilde{x}} = A_2$.
\item
\label{en:cond2}
$
	\mathcal{I}^1 \ast A_2 \simeq_r A_2.
$
\end{enumerate}
In Ref.~\onlinecite{PhysRevA.91.032110},
the author derived the condition~\ref{en:cond1} as a sufficient condition 
for a so called relative-entropy conservation law.
In Ref.~\onlinecite{:/content/aip/journal/jmp/56/9/10.1063/1.4931625},
the author reformulated this condition in the form of \ref{en:cond2}
and called it an information conservation condition,
while the condition~\ref{en:cond1} is a sufficient condition but not a necessary one 
for the condition \ref{en:cond2}.
Noting that
the condition~\ref{en:cond2} is invariant under replacing 
$A_2$ with another regularly fuzzy equivalent standard Borel POVM~\cite{:/content/aip/journal/jmp/56/9/10.1063/1.4931625},
this discrepancy can be resolved by taking $A_2$ an equivalent $\simeq_w$-minimal sufficient POVM,
which is always possible due to Theorem~\ref{theo:exunique},
and in this sense
the two conditions are essentially equivalent.

\begin{acknowledgments}
The author acknowledges supports
by Japan Society for the
Promotion of Science (KAKENHI Grant No. 269905).
He also would like to thank 
Takayuki Miyadera (Kyoto University)
and 
Erkka Haapasalo
(University of Turku)
for helpful discussions and comments.
\end{acknowledgments}

\appendix
\section{Inequivalence of $\simeq_r $ and $\simeq_w$}
\label{sec:app1}
In this appendix, we construct a pair of POVMs 
that are weakly fuzzy equivalent but not regularly fuzzy equivalent.

Let $(\Omega_A , \mathscr{B}_A , A)$ be a POVM.
The completion of $(\Omega_A , \mathscr{B}_A , A)$ is 
a POVM $(\Omega_A , \bar{\mathscr{B}}_A , \bar{A})$
defined by
\begin{gather*}
	\mathscr{N}_A
	:=
	\Set{
	N \subset \Omega_A
	|
	\exists N^\prime \in \mathscr{B}_A \text{ s.t.} \,  N \subset N^\prime  \text{ and } A (N^\prime) = O
	} ,
	\\
	\bar{ \mathscr{B} }_A
	:=
	\Set{ 
	E \subset \Omega_A 
	|   
	\exists F \in \mathscr{B}_A 
	\text{ s.t.} \,
	E \triangle F \in \mathscr{N}_A 
	} ,
	\\
	\bar{A}(E) := A (F)
	,
	\quad
	(E \in \bar{\mathscr{B}}_A , F \in \mathscr{B}_A , E \triangle F \in \mathscr{N}_A).
\end{gather*}
Here $E \triangle F := ( E \setminus F )  \cup ( F \setminus E)$
is the symmetric difference of sets.
An element of $\mathscr{N}_A$ is called an $A$-null set.

\begin{lemm}
\label{lemm:completion}
Let $(\Omega_A , \mathscr{B}_A , A)$ be a POVM
and let $(\Omega_A , \bar{\mathscr{B}}_A , \bar{A})$ be the completion of
$(\Omega_A , \mathscr{B}_A , A).$
Then $A \simeq_w \bar{A} .$
\end{lemm}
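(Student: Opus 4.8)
The plan is to exhibit weak Markov kernels realising both $A \preceq_w \bar{A}$ and $\bar{A} \preceq_w A$. Since $A$ and $\bar{A}$ share the point set $\Omega_A$ and differ only in that $\bar{\mathscr{B}}_A$ enlarges $\mathscr{B}_A$ by $A$-null sets, I expect indicator functions to serve as the kernels, with the almost-everywhere clauses in the definition of a weak Markov kernel absorbing all the null-set discrepancies.

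The direction $A \preceq_w \bar{A}$ is immediate: for $E \in \mathscr{B}_A$ I would set $\kappa(E|y) := \chi_E(y)$, which is $\bar{\mathscr{B}}_A$-measurable because $\mathscr{B}_A \subset \bar{\mathscr{B}}_A$, and which satisfies the normalisation and countable-additivity requirements pointwise. The defining identity reads $A(E) = \int_{\Omega_A} \chi_E \, d\bar{A} = \bar{A}(E)$, which holds because $\bar{A}(E) = A(E)$ for every $E \in \mathscr{B}_A$.

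For the reverse direction $\bar{A} \preceq_w A$ I would, for each $E \in \bar{\mathscr{B}}_A$, fix some $F_E \in \mathscr{B}_A$ with $E \triangle F_E \in \mathscr{N}_A$ and put $\kappa(E|y) := \chi_{F_E}(y)$; this is $\mathscr{B}_A$-measurable, and the fuzzy identity $\int_{\Omega_A} \chi_{F_E}\, dA = A(F_E) = \bar{A}(E)$ is precisely the definition of the completion. The step I expect to require the most care is verifying that these non-canonically chosen indicators still assemble into a genuine $A$-weak Markov kernel, since the choice of $F_E$ is not canonical. Here I would exploit that the normalisation and additivity conditions are only demanded $A$-a.e.: for a countable disjoint family $\{E_i\} \subset \bar{\mathscr{B}}_A$ with union $E$, the sets $E_i \triangle F_{E_i}$ and $E \triangle F_E$ all lie in $\mathscr{N}_A$, so their countable union $N$ is again $A$-null, and off $N$ the equalities $\chi_{F_{E_i}} = \chi_{E_i}$ and $\chi_{F_E} = \chi_E$ hold simultaneously, giving $\kappa(E|\cdot) = \sum_i \kappa(E_i|\cdot)$ together with $\kappa(\Omega_A|\cdot) = 1$ and $\kappa(\emptyset|\cdot) = 0$ up to $A$-null sets. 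Combining the two kernels then yields $A \simeq_w \bar{A}$.
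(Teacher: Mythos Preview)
Your proposal is correct and follows essentially the same route as the paper: indicator functions serve as the kernels in both directions, with $\chi_E$ for $A \preceq_w \bar{A}$ and $\chi_{F_E}$ (for a chosen $\mathscr{B}_A$-representative $F_E$) for $\bar{A} \preceq_w A$. The only cosmetic difference is that the paper observes the first kernel is actually a \emph{regular} Markov kernel, yielding the slightly stronger $A \preceq_r \bar{A}$, and the paper omits the verification of the weak-kernel axioms for the second direction that you spell out carefully.
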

\begin{proof}
Since $A(E) = \bar{A} (E) = \int_{\Omega_A} \chi_E (x) d \bar{A}(x)$
for each $E \in \mathscr{B}_A ,$
$A \preceq_r \bar{A}$ holds.
For each $E \in \bar{\mathscr{B}}_A $
we take $F \in \mathscr{B}_A$ such that $E \triangle F \in \mathscr{N}_A$
and define $\kappa (E|x) := \chi_F(x) .$
Then $\kappa (\cdot| \cdot)$ is an $A$-weak Markov kernel such that
$\bar{A} (E) = \int_{\Omega_A} \kappa (E|x) d A(x) $
for each $E \in \bar{\mathscr{B}}_A .$
Thus we have $\bar{A} \preceq_w A,$
and the assertion holds.
\end{proof}

Let $\mu$ be the usual Lebesgue measure on a unit interval $I := [0,1] ,$
i.e. $\mu$ is the unique measure defined on the $\sigma$-algebra
$\mathscr{B} (I)$ of $I$
generated by open subsets of $I$
such that
$\mu ( [a,b]  ) = b-a$
for each $0 \leq a < b \leq 1 .$
As the system Hilbert space $\mathcal{H},$
we consider the set of square-integrable $\mathscr{B} (I)$-measurable functions
$L^2 (I , \mathscr{B}(I) , \mu)$
in which $\mu\text{-}\mathrm{a.e.}$ equal functions are identified.
We define a projection-valued measure (PVM) $(I , \mathscr{B}(I) , A)$ by
\[
	(A(E) f) (x) := \chi_E (x) f(x)
\]
for each $E \in \mathscr{B}(I)$ and $f \in \mathcal{H} .$
We denote the completion of $(I , \mathscr{B}(I) , A)$ by
$(I , \bar{\mathscr{B}} (I ) , \bar{A}) .$
Since the class of $A$-null sets and that of $\mu$-null sets coincide,
$\bar{\mathscr{B}} (I)$ is the class of Lebesgue measurable subsets of $I .$
From Lemma~\ref{lemm:completion},
we have $A \simeq_w \bar{A}$.

Now we prove that $\bar{A} \not\preceq_r A ,$
from which we immediately obtain the desired relation $A \not\simeq_r \bar{A} .$
Suppose that $\bar{A} \preceq_r A$ holds.
Then there exists a regular Markov kernel
$\kappa(\cdot| \cdot) \colon \bar{\mathscr{B}} (I) \times I \to [0,1]$
such that 
\[
	\bar{A} (E) 
	=
	\int_{I} \kappa (E|x) dA (x)
\]
for each $E \in \bar{\mathscr{B}} (I) .$
Thus we have
\begin{equation}
	\int_I \chi_E (x) d A(x)
	=
	A(E) = \bar{A} (E)
	=
	\int_I \kappa (E|x) d A(x)
	\label{eq:chikappa} 
\end{equation}
for each $E \in \mathscr{B}(I) .$
From Remark~5 of Ref.~\onlinecite{Dorofeev1997349}, 
Eq.~\eqref{eq:chikappa} implies that
$\chi_E(x) = \kappa (E|x)$ 
for $\mu$-a.e. $x \in I. $
Therefore there exists a $\mu$-null set $N \in \mathscr{B}(I)$
such that
\begin{equation}
	\kappa ([0 , r]  |x) = \chi_{[0,r]} (x)
	\quad
	(\forall r \in I \cap \mathbb{Q} , \forall x \in I \setminus  N),
	\label{eq:rational}
\end{equation}
where $\mathbb{Q}$ is the set of rational numbers.
Noting that $\kappa (\cdot |x)$ is a probability measure 
for each $x \in I ,$
Eq.~\eqref{eq:rational} indicates that 
$\kappa (E|x) = \chi_E (x)$ for each $x \in I \setminus N  $
and $E \in \mathscr{B}(I) .$
Thus 
we have 
$\kappa (  I \setminus  \{ x \}  | x ) = 0$
for each $x \in I \setminus N $
and this implies that 
$\kappa (E | x)  = \chi_E (x)$
for each $x \in I \setminus N$ and $E \in \bar{\mathscr{B}} (I) .$ 
If there exists a set $E $ such that
\begin{equation}
	E \subset I \setminus N,
	\quad
	E \in \bar{\mathscr{B}} (I) \setminus \mathscr{B}(I) ,
	\label{eq:Econd}
\end{equation}
then
we have $ \chi_E (x) = \kappa (E|x) \chi_{I \setminus N} (x) $,
which contradicts the $\mathscr{B}(I)$-nonmeasurability of 
$\chi_E (x) .$
%$\kappa (E|\cdot) .$

Now we show the existence of $E$ satisfying~\eqref{eq:Econd}.
Since $( I \setminus N ,  (I \setminus N ) \cap \mathscr{B}(I) )$
is a standard Borel space and the restriction of $\mu$ to $I \setminus N$ is a continuous  measure,
from Theorem~17.41 of Ref.~\onlinecite{kechris1995classical},
there exists a $  (I \setminus N ) \cap  \mathscr{B}(I) / \mathscr{B}(I)$-bimeasurable
bijection $f \colon I\setminus N \to I$
such that 
$\mu (f(E)) = \mu (E) $
for every $E \in (I \setminus N ) \cap \mathscr{B}(I) .$
Since there exists a set $\tilde{E} \in \bar{\mathscr{B}} (I) \setminus \mathscr{B}(I) ,$
$E := f^{-1} (\tilde{E})$ satisfies the condition~\eqref{eq:Econd},
which completes the proof of the assertion.

\end{document}